%% file: main.tex
\documentclass[11pt]{article}
\usepackage[utf8]{inputenc}
\usepackage[a4paper,left = 1.25in,right = 1.25in, top = 1.25in, bottom = 1.25in]{geometry}
\usepackage{amsmath}
\usepackage{amsthm}
\usepackage{bm}
\usepackage{cancel}
\usepackage{amssymb}
\usepackage{dsfont}
\usepackage{commath}
\usepackage{mathtools}
\usepackage{tikz,tikz-cd}
\usepackage{float}
\usepackage{cases}
\usepackage{url}

\usepackage{enumitem,linegoal}

\usepackage[numbers]{natbib}

\input{Math_Operators}

\usepackage{subcaption}
\usepackage{graphicx}
\usepackage{color}
\usepackage{algorithm}

\usepackage{xcolor}
\usepackage{hyperref}
\hypersetup{colorlinks,linkcolor=blue,citecolor=blue,urlcolor = magenta}
\usepackage[capitalise]{cleveref}

\tikzset{
    myarr/.style={-stealth, thick},
    mylab/.style={font=\small},
}

\usepackage{fancyhdr}
\usepackage{authblk}
\title{\vspace{-10mm} 
Pricing with Passion: \\ 
The Local Occupied Volatility (LOV) Model\\[0.5em]

\vspace{-3mm}}

\date{\today}

\author{Valentin Tissot-Daguette\footnote{Email: {\tt vtissotdague@bloomberg.net}   }}%

\vspace{-2mm}
\affil
{\footnotesize Quantitative Research, Office of the  CTO, Bloomberg \vspace{-2mm} }

\begin{document}
\maketitle
\vspace{-1cm}

\input{Abstract}

\setcounter{tocdepth}{2}
\tableofcontents

\input{Intro}

\input{LOV}

\input{Calibration}

\input{Conclusion}

\bibliographystyle{abbrvnat}
\bibliography{main.bib}
\end{document}

%% file: Math_Operators.tex
\def\be{\begin{equation}}
\def\ee{\end{equation}}

\def\bea{\begin{align}}
\def\eea{\end{align}}

\def\bea*{\begin{align*}}
\def\eea*{\end{align*}}
\def\mo{\mathfrak{o}}
\theoremstyle{plain}

\newtheorem{proposition}{Proposition}

\theoremstyle{definition}

\newtheorem{example}{Example}
\newtheorem{remark}{Remark}

\DeclareMathOperator{\E}{\mathbb{E}}

\DeclareMathOperator{\N}{\mathbb{N}}

\DeclareMathOperator{\Pb}{\mathbb{P}}
\DeclareMathOperator{\Q}{\mathbb{Q}}
\DeclareMathOperator{\R}{\mathbb{R}}

\DeclareMathOperator{\calB}{\mathcal{B}}
\DeclareMathOperator{\calC}{\mathcal{C}}

\DeclareMathOperator{\calL}{\mathcal{L}}
\DeclareMathOperator{\calM}{\mathcal{M}}
\DeclareMathOperator{\calN}{\mathcal{N}}
\DeclareMathOperator{\calO}{\mathcal{O}}

\DeclareMathOperator{\frakF}{\mathfrak{F}}


%% file: Abstract.tex
\vspace{0mm}
\begin{abstract}


We  introduce 
the Local Occupied Volatility (LOV) model that sits  between Dupire’s local volatility and fully path-dependent dynamics.
By design, the LOV model ensures automatic calibration to European vanilla options, 
while offering the flexibility to capture stylized  facts of volatility or fit additional instruments. 
This is achieved by tuning the occupation sensitivity function that quantifies the effect of path-dependent shocks  on volatility. We validate the model through the joint American-European calibration of  options chain on non-dividend paying stocks. 


\end{abstract}
\vspace{2mm}

\textbf{Keywords:} Path-dependent volatility, occupation  flows,  American options,   Markovian expansion,  smile calibration.

 \vspace{3mm}

\textbf{MSC (2020)}:  
60G40, 
91G20, 
91G60, 
91G30

\vspace{3mm}

\textbf{Acknowledgments.} 
I would like to thank Bruno Dupire,  Julien Guyon, Mykhaylo Shkolnikov, and Thibault Jeannin for their stimulating comments.

%% file: Intro.tex
\section{Introduction}

Path-dependent volatility models \cite{HobsonRogers,Guyon2014,GuyonSlides,GuyonLekeufack} have gained in popularity in recent years, notably for their flexibility to capture stylized facts of volatility, fit a variety of liquid contracts, while preserving  market completeness. 
Herein, we illustrate the need for path dependence in volatility through the following practical, yet  overlooked, example.

Consider the calibration of  volatility  to  vanilla options on a U.S. non-dividend paying stock such as Amazon.com Inc. (AMZN). Assume constant, positive interest rate $r  >0$ throughout. 
Although U.S. single name options are American-style, due to the absence of dividends,  the prices of call options  coincide with their European counterparts \cite{Merton73}.  If $c^{E}(K,T),c^{A}(K,T)$ denote respectively the mid-price of the $(K,T)$ European and American call option then  
$$c^{A}(K,T)  = c^{E}(K,T), \quad \forall K,T.$$ 
Classically, the local volatility (LV)  model \cite{DupireLV} provides   perfect calibration to European vanilla options; if under $\Q_{\text{loc}}$, the underlying evolves according to 
\begin{equation}\label{eq:LV}
    \frac{dX_t}{X_t} = rdt + \sigma_{\text{loc}}(t,X_t) dW_t,
\end{equation}
where $\sigma_{\text{loc}}$ is obtained from Dupire's formula, 
then 
$  c^E(K,T) = \E^{\Q_{\text{loc}}}[e^{-rT} (X_{T}- K)^+]$ holds for all $K\ge 0$. 
On the other hand, the put options do carry an early exercise premium, and there is no guarantee that the local volatility model generates exactly the observed put option quotes. That is, there typically exist $(K,T)-$put options such that 
$$p^A(K,T) \  \ne \  \sup \hspace{0.2mm} \big\{ \E^{\Q_{\text{loc}}}[e^{-r\tau} (K-X_{\tau})^+] \ : \ \tau \le T \; \text{ stopping time} \big \}, $$
where the right-hand side is the  price of the American put  under $\Q_{\text{loc}}$. 
Consequently, one needs to look beyond Markovian models to jointly fit single name call and  put options. In other words, 
volatility \textit{must be} path-dependent. 

The aim of  this work is to   explore  a specific type of path dependence to model volatility, generated by the   \textit{occupation times} 
$$\calO_t(A) = \int_0^t \mathds{1}_{A}(X_s) e^{\kappa s}ds, \quad A\subseteq \R_+,$$
for some exponential decay parameter $\kappa \ge 0$. 
The use of occupation times is motivated by recent studies on  \textit{occupied stochastic differential equations} 
\cite{TissotOP,TissotThesis,TissotZhang}, where $\calO_t$ notably  appear in the diffusion coefficient, i.e.,  volatility. The Markovian model \eqref{eq:LV} is thus extended to the path-dependent dynamics,
\begin{equation}\label{eq:OV}
    \frac{dX_t}{X_t} = rdt + \sigma_t dW_t, \quad \sigma_t = \sigma(\calO_t,X_t).
\end{equation}
As argued in \cref{sec:nodividends}, careful choice of the occupied volatility 
can solve the aforementioned joint American-European calibration problem.
First, the volatility process in \eqref{eq:OV}  must  satisfy the \textit{smile calibration condition},
\begin{equation}\label{eq:smileIntro}
    \E^{\Q}[\sigma^2_t \ | \ X_t = x] = \sigma_{\text{loc}}^2(t,x), \quad \forall \  t,x,
\end{equation}
where  $\Q$ is the risk neutral measure associated with \eqref{eq:OV}. From  \cite{dupire2004unified,gyongy}, identity \eqref{eq:smileIntro} is equivalent to a perfect match of all European call option prices under $\Q$. 
The smile calibration condition is guaranteed when 
applying a first order  \textit{Markovian expansion}  (see \cref{sec:MarkovianExpansion}) to the volatility functional in \eqref{eq:OV}. We then postulate the local occupied volatility (LOV) model, 
\begin{align}
  \sigma^2_t = \sigma_{\text{loc}}^2(t,X_t) + 
\gamma_t \int_{\R_+}\ell(t,X_t,x) (\mathcal{O}_t-\hat{\mathcal{O}}_t)(dx),      \label{eq:LOVOIntro}
\end{align}
with $\gamma_t = \calO_t(\R)^{-1}$ and  the projected measure $\hat{\mathcal{O}}_t = \E^{\Q}[\calO_t  | X_t ]$. Indeed, the smile calibration condition \eqref{eq:smileIntro} holds, since the rightmost term in \eqref{eq:LOVOIntro} has been  centered.  
Subsequently, the \textit{sensitivity function} $\ell(\cdot)$ can be tuned to match the American put option quotes. The LOV model is still useful when all vanilla options are European, e.g. if $X$ is the price of an index,  in which case the sensitivity function is used to create more realistic spot-vol dynamics than in a plain LV model, see \cref{subsubsec:LOV}. 


\begin{figure}[t]
    \centering
        \caption{Vanilla options on Amazon Inc. (AMZN). The put options carry an early exercise premium, while the calls do not.  }
    \includegraphics[width=0.75\linewidth]{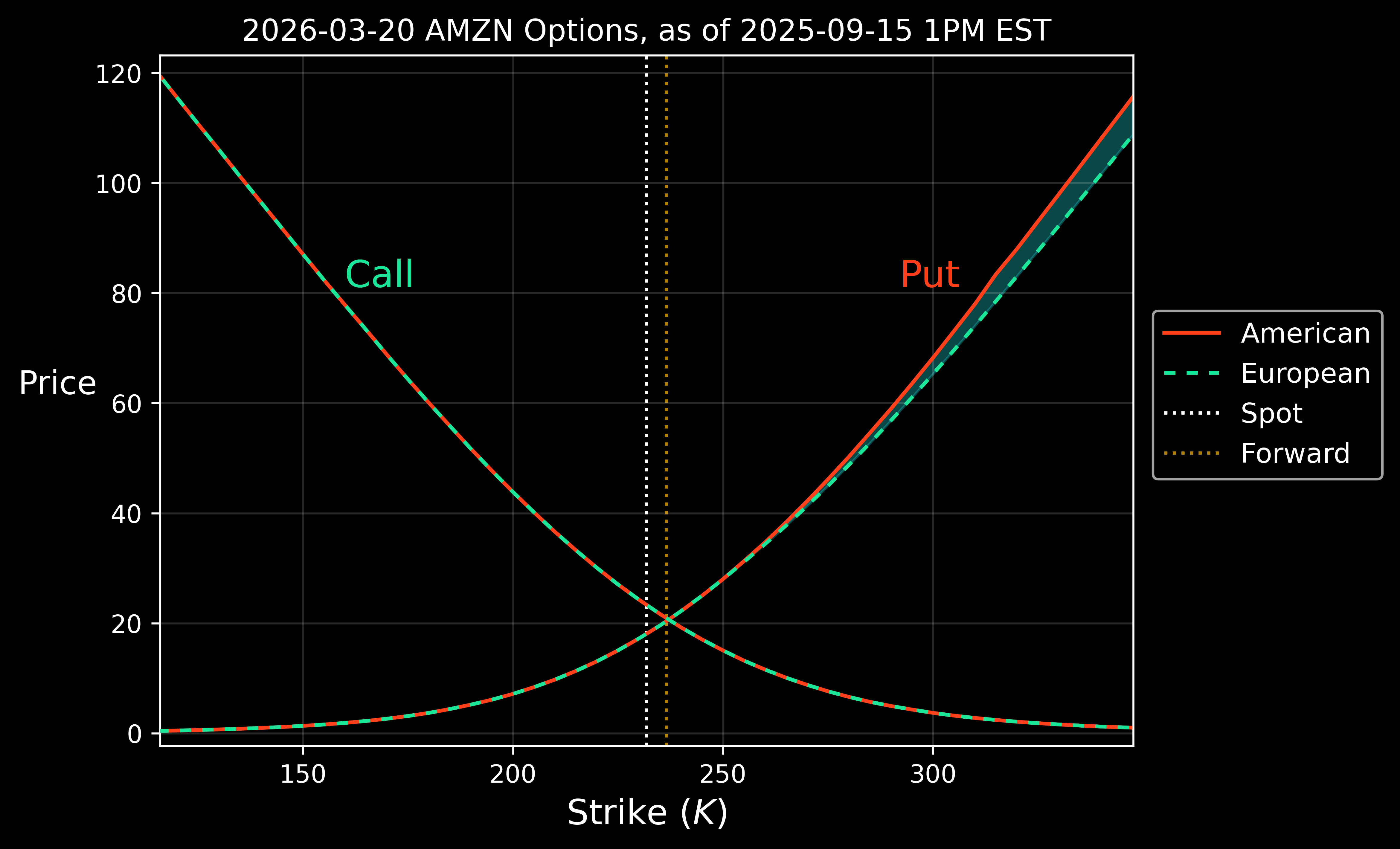}

    \label{fig:amznOPtions}
\end{figure}

%% file: LOV.tex
 \section{Occupied Volatility}
 
Consider  the calendar time occupation flow  $\calO = \int_0^{\cdot} \delta_{X_s} ds$ \cite{TissotOP} and  assume that the asset price process $X$ follows the  dynamics, 
\begin{equation}\label{eq:OVM}
  \frac{dX_t}{X_t} = (r-q)dt+ 
    \sigma(\calO_t,X_t)dW_t, 
\end{equation} 
with interest rate $r\in \R$ and dividend yield $q\ge 0$, both assumed to be constant for simplicity. 
The diffusion coefficient $\sigma$ shall be termed    \textit{occupied volatility}. 
A special case is Dupire's local volatility model \cite{DupireLV}. Indeed, if $\sigma_{\text{loc}}$ 
denotes the  local volatility function, then 
\begin{equation}
    \sigma(\mo,x) := \sigma_{\text{loc}}(\mo(\R),x) \; \Longrightarrow \; \sigma(\calO_t,X_t)  = \sigma_{\text{loc}}(t,X_t).
\end{equation}
Hence \eqref{eq:OVM}  gives a sensible  generalization of the local volatility model  as long as  $\sigma$  satisfies  the 
\textit{smile calibration condition} \cite{BrunickShreve,dupire2004unified,gyongy},   
\begin{equation}\label{eq:smileCalibration}
     \E^{\Q} [\sigma^2(\calO_t,X_t)\ |\ X_t] = \sigma_{\text{loc}}^2(t,X_t). 
\end{equation}
 ensuring that vanilla call and put options are correctly priced. 
On the other hand, occupied volatility is a particular example of  path-dependent volatility  \cite{BrunickShreve,Guyon2014,GuyonSlides,GuyonLekeufack,HobsonRogers} and thus  enjoys similar benefits. Notably,  as the occupation flow is  endogenous,  the model remains complete, i.e.,   any contingent claim  can be perfectly replicated by dynamically trading   the underlying.  

From an econometric perspective, 
 calendar time may not be the most realistic clock to model volatility, due to  the chronology invariance of $\calO$. 
As an illustration, fix a one-year horizon with current time $t = 1$, and consider two scenarios $\omega^1$, $\omega^2$ such that $X_1(\omega^1) = X_1(\omega^2)$ and $\calO_1(\omega^1) = \calO_1(\omega^2)$. Then $\omega^1$,  $\omega^2$ generate the same occupied volatility level. However,  suppose that in the first (respectively second) scenario, a financial crash occurred a week (resp. a year) ago, while the market has been bullish otherwise. One would then expect  much  higher volatility  in the first scenario.

\begin{figure}[t]
\centering
\caption{Occupation measure using calendar time ($\calO_t$, left)  compared with exponential time ($\calO$, right). As can be seen, $\calO$ puts more emphasis on recent levels of the path.  }

\begin{subfigure}[b]{0.47\textwidth}
     \centering
     \caption{Calendar time. Preferred under $\Q$.}
\includegraphics[height=2in,width=2.6in]{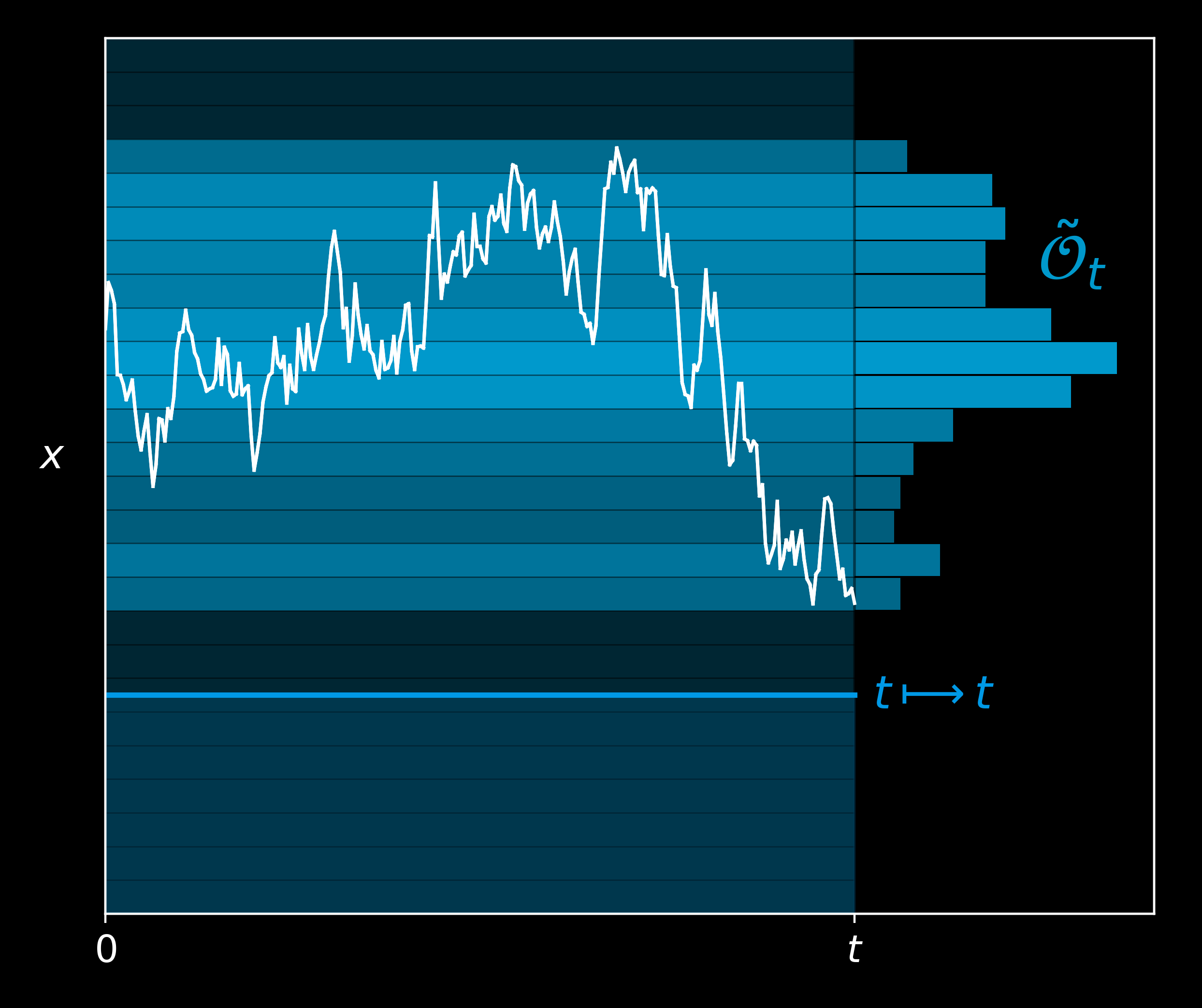}
\end{subfigure}
\begin{subfigure}[b]{0.47\textwidth}
     \centering
     \caption{Exponential time. Preferred under $\Pb$.}
\includegraphics[height=2in,width=2.6in]{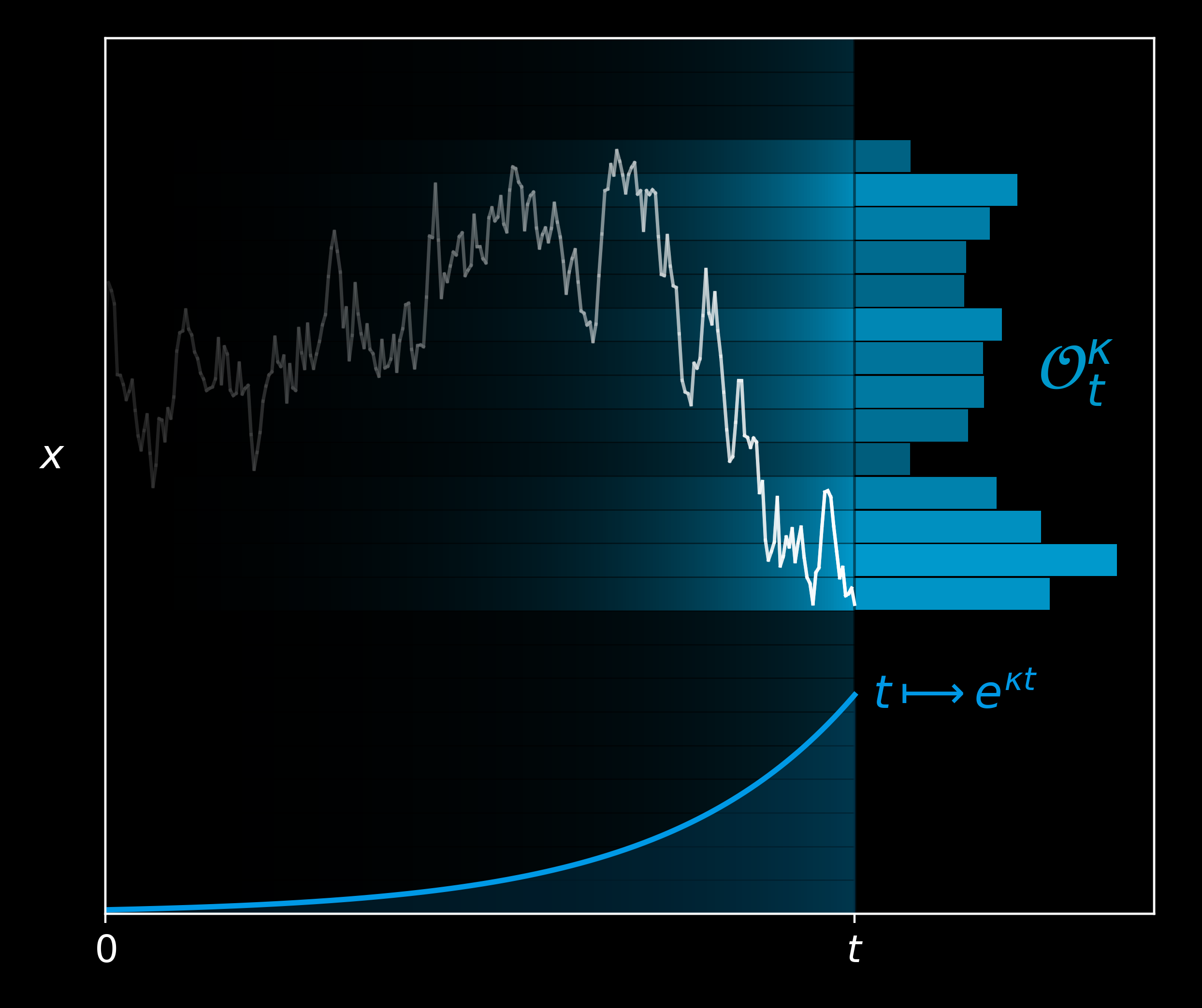}
 \end{subfigure}
\vspace{5mm}
 \label{fig:OSDEVol}
 \end{figure}

The above example  
calls for a clock  that captures a memory decay in past information. 
Among others, this is achieved through the \textit{exponential time} occupation flow, 
\begin{equation}\label{eq:occDecay}  d\calO_t = \delta_{X_t}e^{\kappa t}dt,  \quad \kappa \ge 0. 
\end{equation}
The calendar time occupation flow is recovered by setting $\kappa = 0$.  
Note that calendar time is still relevant for the pricing/hedging of exotic options due to the unified framework  it offers; see Section 4 in \cite{TissotOP}. 
Hence in general, 
calendar time is preferred under $\Q$ (pricing measure), while exponential time is preferred under $\Pb$ (historical measure).


The volatility now takes the exponential time occupation flow as input, leading to 
\begin{equation}\label{eq:OVMExp}
  \frac{dX_t}{X_t} = (r-q)dt+  
    \sigma(\calO_t,X_t)dW_t. 
\end{equation} 
By increasing the 
exponential decay parameter $\kappa$, the occupation flow puts more emphasis on the present. 
This adjustment makes volatility more reactive to recent shocks,  and  allow the model to  replicate stylized facts such as volatility clustering. 


\begin{example}\label{ex:occVol} \textnormal{(Guyon's toy volatility model)}  
Consider the occupied volatility    
\begin{equation}\label{eq:volOSDE} 
   \sigma(\calO_t,X_t) = \Sigma( \Upsilon(\calO_t,X_t)),  \quad \Sigma(y) = -\frac{\alpha}{\beta} + \gamma y^{-\beta}, 
\end{equation}
with parameters $\alpha,\beta,\gamma>0$, and functional 
\begin{equation}\label{eq:trend}
\Upsilon(\calO_t,X_t) = \frac{X_t}{\frac{1}{\calO_t(\R)}\int_{\R} x \calO_t(dx)}. 
\end{equation}
The ratio \eqref{eq:trend} characterizes the asset's trend by noting that  the barycenter of  $\calO$ coincides with the exponentially weighted moving average (EMA) of the asset,  
\begin{equation}\label{eq:EWMA}
     \text{EMA}_t :=  \frac{1}{1-e^{-\kappa t}} \int_0^t  X_u \kappa e^{-\kappa (t-u)} du = \frac{1}{\calO_t(\mathbb{R})} \int x \calO_t(dx). 
\end{equation} 
Then $dX_t =  \sigma(\calO_t,X_t)X_tdW_t$  is 
precisely Guyon's toy volatility model  \cite{GuyonSlides}, further   analyzed by \citet{JackBonesini}. 
Since $\Sigma$ in \eqref{eq:volOSDE} is a  decreasing function,    the volatility  $\sigma$  is inversely proportional to the asset's trend. 
 This is in line with 
 the \textit{leverage effect}, a stylized fact in equity markets that volatility   increases when the asset price drops.  

 
\end{example}



\subsection{Simulation of Occupied SDEs}\label{app:numOSDE}


The numerical integration of OSDEs inevitably requires a discretization of the occupation flow.  
Focusing on the occupied volatility model 
\eqref{eq:OVMExp}, 
a natural way consists of keeping track of finitely many occupation times  $\calO_{t}(C_1), \ldots, \calO_{t}(C_M)$, where $(C_m)$ forms a partition of $\R_+$. 
Concretely, let $  x_1 < \ldots < x_M $,  $M\in \N$, and introduce the corridors,  
 $$C_m = [x_m - \varepsilon_{m-1}, x_m + \varepsilon_m),  \quad \varepsilon_m = \frac{x_{m+1}-x_m}{2}, \quad (\varepsilon_{0} = x_1, \varepsilon_M = \infty).$$ 
The nodes $(x_m)$ are chosen so as to match the scale of $X_0$ and OSDE coefficients. 
We then replace $\calO_t$  by 
the discrete measure $  \sum_{m=1}^M  \calO_{t}(C_m) \delta_{x_m}$, 
and approximate OSDE \eqref{eq:OVMExp}  by the finite-dimensional system
\begin{equation}\label{eq:finiteOSDE}
    \frac{dX_t}{X_t} = (r-q)dt +  \sigma^M(O_t,X_t)  dW_t, \qquad dO_{t,m} = \mathds{1}_{C_m}(X_t) e^{\kappa t}dt, 
\end{equation}
with $O_t = (\calO_{t}(C_1), \ldots, \calO_{t}(C_M))\in \R^M$. In light of \eqref{eq:finiteOSDE},  $(O,X)$ is an $M+1-$-dimensional Markov process. Moreover, the projected volatility $$\sigma^M:\R^{M+1}\to \R, \qquad \sigma^M(O_t,X_t) = \sigma\Big(\sum\limits_{m=1}^M O_{t,m} \delta_{x_m},X_t\Big),$$
 converges to $\sigma(\calO_t,X_t)$ when $\sigma(\mo,x)$ is weakly continuous in $\mo$. 
 We then generate sample paths from \eqref{eq:finiteOSDE}  as follows using standard discretization techniques \cite{KloedenPlaten,Glasserman2003}.  Given the   time grid $t_n = n\delta t$, $\delta t = \frac{T}{N}$, $N\in \N$,  $X_0 \in \R_+$,  and $O^{\text{\tiny tmp}} = 0 \in \R^M$,  the SDE \eqref{eq:finiteOSDE} is updated  according to 
\begin{align*}
\begin{cases}
     \sigma_{t_{n}}&=   \      \sigma^M(O^{\text{\tiny tmp}},X_{t_n}), \\[0.3em] 
     X_{t_{n+1}} &= \ X_{t_{n}} \exp\{\sigma_{t_n} \sqrt{\delta t}Z_{n} + (r-q- \frac{1}{2} \sigma_{t_n}^2)\delta t \},\\[0.3em]
          O^{\text{\tiny tmp}}_m&\leftarrow \  O^{\text{\tiny tmp}}_m +   e^{\kappa t_n} \delta t, \quad X_{t_{n}} \  \in \ C_m, 
     \end{cases}
\end{align*}
 where $Z_n \overset{i.i.d.}{\sim} \calN(0, 1)$, $n=0,\ldots,N-1$. 
Note that the occupation measure is represented by the temporary variable $O^{\text{\tiny tmp}}$ updated at each iteration.  Indeed, 
 it is not necessary (and costly) to store the occupation times $O_{t_n}$ for each $n$. 

 \begin{remark}
     More generally, one may employ 
 a partition of unity $\{f_m:\R_+ \to [0,1]\}_{m\le M}$, $\sum_{m} f_m \equiv 1$,   leading to the \textit{cylindrical projection}  $\sigma(\calO_t,X_t)  \approx  \sigma^M(f_1 \cdot \calO_t,\ldots,f_M \cdot \calO_t,X_t)$. The  occupation time approximation  is thus recovered by setting $f_m = \mathds{1}_{C_m}$, and  similarly to $(C_m)$ being a partition, the property $\sum_{m} f_m \equiv 1$ ensures that the total mass of  $\calO$ is preserved. A convergence analysis of cylindrical projections of  OSDEs as $M\to \infty$  
is  addressed in the companion  paper \cite{TissotZhang}.
 \end{remark}

\begin{remark}\label{rem:PSDEs}
   The simulation of fully path-dependent volatility models,   $$\frac{dX_t}{X_t} = (r-q)dt + \sigma(t,\omega)dW_t, $$  
typically  involves discretization of the form 
$\sigma(t_n,\omega) \approx \sigma^N_{n}(X_{t_1}(\omega), \ldots, X_{t_{n}}(\omega))$  for some  function $\sigma^N_n:\R^n\to \R$.  
Note the input dimension of $\sigma^N_n$ grows linearly with $n$,  which becomes  highly problematic when the total number of time steps $N$ is large. 
In contrast, occupied SDEs decouple the dimensionality of the space and time discretization, allowing for efficient simulations 
as shown in \cref{ex:occVol} and \cref{sec:calibration}. 
\end{remark}

\section{Local Occupied Volatility (LOV)}\label{sec:LOV}

\subsection{Markovian Expansion} \label{sec:MarkovianExpansion}
Let us start with an autonomous path-dependent volatility model of the form 
$$\frac{dX_t}{X_t} = (r-q)dt  +  \sigma(F_t,X_t)dW_t,  $$
for  some path-dependent feature $F_t(\omega) = \textnormal{f}(t,\omega) \in \frakF$, $\textnormal{f}:[0,T]\times \Omega_T \to \frakF$. The  \textit{feature space} $\frakF$ may be finite-dimensional, e.g. 
$F_t \in \{\sup_{s\le t}X_s, \int_0^tX_sds\}$  in \citet{BrunickShreve}, 
or infinite-dimensional in occupied volatility models where $\frakF = \calM$; see \cref{subsubsec:LOV}.   
Recall the smile calibration condition 
\begin{equation}\label{eq:smileCalibration}
    \sigma_{\text{loc}}^2(t,X_t) = \E^{\Q}[\sigma^2(F_t,X_t) \ | \ X_t], 
\end{equation}
 and introduce the \textit{projected feature} 
$\hat{F}_t = \E^{\Q}[F_t|X_t] = \hat{f}(t,X_t)$ for some map $ \hat{f}: \R_+^2\to \frakF.$ 
If  $\sigma^2(F,X)$ were linear in $F$, then 
$$ \E^{\Q}[\sigma^2(F_t,X_t)  |  X_t]  =\sigma^2(\E^{\Q}[F_t |  X_t],X_t)   = \sigma^2(\hat{F}_t,X_t),$$
so \eqref{eq:smileCalibration} would imply that  $\sigma_{\text{loc}}(t,X_t) = \sigma(\hat{f}(t,X_t),X_t)$. 
In general, one can linearize $F\mapsto \sigma^2(F,X)$ via a first order \textit{Markovian expansion} around  $\hat{F}_t$, that is
\begin{align}\label{eq:MarkovExpansion}
    \sigma^2(F_t,X_t) \approx \sigma^2(\hat{F}_t,X_t) + D\sigma^2(\hat{F}_t,X_t)\cdot (F_t - \hat{F}_t), 
\end{align}
with Fréchet derivative $D\sigma^2(\cdot,X_t): \frakF \to \frakF^*$ and  pairing  $  \ell \cdot F = \ell(F) \in \R,$ $(\ell,F) \in \frakF^*\times \frakF$. Note that  $D\sigma^2(\hat{F}_t,X_t)$ is the sensitivity of  variance with respect to  non-Markovian shocks. 
In view of \eqref{eq:smileCalibration} and the fact that $\sigma^2(\hat{F}_t,X_t)$, $D\sigma^2(\hat{F}_t,X_t)$ are functions of $(t,X_t)$, we can \textit{postulate}  the following local path-dependent volatility (LPDV) model,  
\begin{align}
    \frac{dX_t}{X_t} &= (r-q)dt  +  \sigma_tdW_t, \label{eq:LPDV0} \\[0.5em]
    \sigma_t^2 &= \sigma_{\text{loc}}^2(t,X_t) + \ell(t,X_t) \cdot (F_t - \hat{F}_t). \label{eq:LPDV}
\end{align}
That is, we have replaced the term  $\sigma^2(\hat{F}_t,X_t)$ in \eqref{eq:MarkovExpansion} with the local variance, and  the  derivative $D\sigma^2(\hat{F}_t,X_t)$ with an arbitrary map $\ell: \R_+^2\to \frakF^*$. Due to   $\E^{\Q}[F_t - \hat{F}_t|  X_t] =0$, the above specification automatically satisfies the smile calibration condition \eqref{eq:smileCalibration} for any reasonable choice of $\ell$, e.g.  such that $\sigma^2_t$ remains positive. Since $\hat{F}_t$ depends on $\Q$, so does $\sigma_t$ in \eqref{eq:LPDV}, leading to McKean type dynamics. 


\begin{remark}
    The LPDV model   \eqref{eq:LPDV} is reminiscent of local stochastic volatility (LSV) models \cite{GuyonHenryLabordere2012,GuyonHL} where one multiplies $\sigma_{\text{loc}}$ by an exogenous stochastic processes $\alpha_t$ and normalized 
such that \eqref{eq:smileCalibration} holds. The volatility process is thus of the form 
\begin{equation}\label{eq:LSV}
    \sigma_t  = \sigma_{\text{loc}}(t,X_t) \frac{\alpha_t}{\sqrt{\E^{\Q}[\alpha_t^2|  X_t]}}, 
\end{equation}
where the conditional expectation is 
typically estimated using the particle method of \citet{GuyonHenryLabordere2012}. It is a powerful family of models that is consistent with vanilla options and capable of  producing more realistic spot-vol dynamics than the standard local volatility model. 
On the other hand,  local path-dependent volatility  models also   preserve market completeness by explaining future volatility from the price history  directly. Moreover, considering an additive correction term 
 is arguably preferred numerically over the fraction in \eqref{eq:LSV} with  distribution-dependent   denominator. Indeed,  any  error introduced when computing  $\hat{F}_t$  
is controlled in a linear way.
\end{remark}

\subsection{The LOV Model}\label{subsubsec:LOV}

Let us set $F_t = \calO_t$. 
 The feature space is $\frakF = \calM$, and the Fréchet derivative in \eqref{eq:MarkovExpansion} becomes the linear derivative \cite{CarmonaDelarue}, which we denote by $\delta_{\mo}$. 
 Recalling the dual pairing $\phi\cdot \mo = \int_{\R}\phi(x)\mo(dx)$, $(\phi,\mo)\in \calC_b(\R) \times \calM$, we can then rewrite \eqref{eq:MarkovExpansion} as 
 \begin{align}\label{eq:MarkovExpansion2}
    \sigma^2(\calO_t,X_t) \approx \sigma^2(\hat{\calO}_t,X_t) + \int_{\R_+}\delta_{\mo}\sigma^2(\hat{\calO}_t,X_t)(x) \ (\calO_t - \hat{\calO}_t)(dx). 
\end{align}
The \textit{occupation sensitivity} $\delta_{\mo}\sigma^2(\hat{\calO}_t,X_t)(x) = \frac{d}{dh}\sigma^2(\hat{\calO}_t + h\delta_x,X_t)|_{h=0}$ describes the change in  variance when the asset spends more time   at level $x$ than implied by the expected occupation measure $\hat{\calO}_t$. 
We then replace $\sigma^2(\hat{\calO}_t,X_t)$  with the local variance, and rewrite the occupation sensitivity as 
$$\delta_{\mo}\sigma^2(\hat{\calO}_t,X_t)(x) = \gamma_t^\kappa \ell(t,X_t,x), \quad \gamma_t^\kappa = \calO_t^\kappa(\R)^{-1} =
    \frac{\kappa}{e^{\kappa t} - 1}, \; \,  \kappa > 0,  
    \quad \gamma_t^0 = \frac{1}{t}. $$
Altogether, we have  derived the Local Occupied Volatility (LOV) model, 
\begin{align}
      \frac{dX_t}{X_t} &= (r-q)dt+  
    \sigma_tdW_t, \label{eq:LOVX}\\ 
  \sigma^2_t &= \sigma_{\text{loc}}^2(t,X_t) + 
\gamma_t^\kappa \int_{\R_+}\ell(t,X_t,x) (\mathcal{O}_t-\hat{\mathcal{O}}_t)(dx).    \label{eq:LOVO}
\end{align}
The factor $\gamma_t^\kappa $ is introduced so as to maintain a constant  magnitude of $\ell$ over time, and allow  to  interpret the linear term as the average  sensitivity of variance when the realized occupation measure $\mathcal{O}_t$ deviates from the  implied   $\hat{\mathcal{O}}_t$. Similar to \eqref{eq:LPDV}, the occupied variance  depends on $\Q$ through  $\hat{\mathcal{O}}_t$, hence $\sigma_t = \sigma(\calO_t,X_t,\Q)$.   
 
Without careful specification of the sensitivity function ($\ell$), 
the LOV model may produce negative variances because the path-dependent term in \eqref{eq:LOVO} is centered around zero. However, mild conditions on $\ell$ can guarantee $\sigma^2$
  remains positive, as shown next.  
  
\begin{proposition}\label{prop:positivity}
Consider the LOV model \eqref{eq:LOVX}--\eqref{eq:LOVO} with sensitivity function  $\ell$ such that  $|\ell(t,x',x)| < \frac{1}{2}\sigma_{\textnormal{loc}}^2(t,x')$ for all $t,x',x$. Then $\sigma_t(\omega) >0$  for all $t$, $\omega\in \Omega$.  
\end{proposition}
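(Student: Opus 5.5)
The plan is to bound the path-dependent correction term in \eqref{eq:LOVO} pointwise in $\omega$, using only that both $\calO_t$ and $\hat{\calO}_t$ are nonnegative measures with the same total mass. The distributional (McKean) nature of $\hat{\calO}_t$ plays no role beyond this, so the argument is deterministic and path-by-path.

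\textbf{Step 1: equal total masses.} By \eqref{eq:occDecay}, $\calO_t$ is a nonnegative measure with
\[
\calO_t(\R_+)=\int_0^t e^{\kappa s}ds=(\gamma_t^\kappa)^{-1},
\]
and this value is deterministic. Hence $\hat{\calO}_t=\E^{\Q}[\calO_t\,|\,X_t]$, understood as the measure $A\mapsto \E^{\Q}[\calO_t(A)|X_t]$ with a regular version, is also a nonnegative measure of total mass $(\gamma_t^\kappa)^{-1}$.

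\textbf{Step 2: bound the correction term.} Fix $t>0$ and $\omega$, and write $x'=X_t(\omega)$ and $c=\sigma_{\text{loc}}^2(t,x')$. By hypothesis $|\ell(t,x',x)|<c/2$ for every $x$. Each integral is controlled by its total mass:
\[
\Big|\gamma_t^\kappa\int \ell(t,x',x)\,\calO_t(dx)\Big| \le \gamma_t^\kappa\, \tfrac{c}{2}\,\calO_t(\R_+) = \tfrac{c}{2},
\]
and the same bound holds with $\hat{\calO}_t$ in place of $\calO_t$. The inequalities are in fact strict. The measures have positive mass $(\gamma_t^\kappa)^{-1}>0$, and the strict pointwise bound $c/2-|\ell|>0$ integrates to a strictly positive quantity against a nonzero nonnegative measure.

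\textbf{Step 3: conclude positivity.} By the triangle inequality,
\[
\Big|\gamma_t^\kappa\int \ell\,(\calO_t-\hat{\calO}_t)(dx)\Big|<\tfrac{c}{2}+\tfrac{c}{2}=c,
\]
so $\sigma_t^2>c-c=0$. Taking the positive square root gives $\sigma_t>0$.

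\textbf{Main obstacle.} Steps 2 and 3 are routine. The delicate points are elsewhere:
\begin{itemize}[leftmargin=*]
\item Justify that $\hat{\calO}_t$ is a genuine nonnegative measure with the stated mass, which requires a regular conditional expectation. This holds because $\R_+$ is Polish and $\calO_t$ is a random finite measure with deterministic mass.
\item Handle $t=0$, where $\gamma_0^\kappa$ blows up. Here I would set $\sigma_0=\sigma_{\text{loc}}(0,X_0)$, or equivalently take the limit $\gamma_t^\kappa(\calO_t-\hat{\calO}_t)\to 0$, since both normalized measures tend to $\delta_{X_0}$.
\item Note the implicit standing assumption $\sigma_{\text{loc}}>0$, which the strict inequality in the hypothesis already forces.
\end{itemize}
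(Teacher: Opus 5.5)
Your proof is correct and follows the same route as the paper. Both arguments bound $\gamma_t^\kappa\int|\ell|\,d(\calO_t+\hat{\calO}_t)$ using that $\calO_t$ and $\hat{\calO}_t$ are nonnegative measures of common mass $(\gamma_t^\kappa)^{-1}$, so the correction term is strictly dominated by $\sigma_{\text{loc}}^2(t,X_t)$. Your handling of strictness, integrating the pointwise gap $\frac12\sigma_{\text{loc}}^2-|\ell|>0$ against a nonzero measure, is actually more careful than the paper's: the paper passes to $c(t,x')=\sup_x|\ell(t,x',x)|$ and asserts $c<\frac12\sigma_{\text{loc}}^2$, whereas the pointwise strict hypothesis only guarantees $c\le\frac12\sigma_{\text{loc}}^2$ in general.
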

\begin{proof}
Let $c(t,x') = \sup_{x}|\ell(t,x',x)|$. By assumption, $c(t,x')< \frac{1}{2}\sigma_{\textnormal{loc}}^2(t,x')$, hence
    $$  \sigma^2_t \ge \sigma_{\textnormal{loc}}^2(t,X_t) - 
\gamma_t^\kappa \int_{\R}|\ell(t,X_t,x)|  (\mathcal{O}_t+\hat{\mathcal{O}}_t)(dx)  \ge \sigma_{\textnormal{loc}}^2(t,X_t) -2c(t,X_t) >0.  $$
\end{proof}
Under the above assumption on $\ell$, the sensitivity function can nearly double the local variance or push it down  closer to  zero. The path-dependent term in \eqref{eq:LOVO} thus offers a wide
range to adjust the local variance while ensuring positivity.
\begin{remark} \label{rem:multiply}
    Equivalently, the local occupied variance can be reformulated using a multiplicative correction, that is  
\begin{equation}
\sigma_t^2 = \sigma_{\text{loc}}^2(t,X_t) \Big(1  + 
\gamma_t^\kappa \int_{\R_+}\tilde{\ell}(t,X_t,x) (\mathcal{O}_t-\hat{\mathcal{O}}_t)(dx) \Big), \label{eq:LOVMultiply}
\end{equation}
where  $\tilde{\ell}$  represents the path-dependent sensitivity of variance relative to the baseline local variance.
Analogous to \cref{prop:positivity}, the condition $|\tilde{\ell}(t,X_t,x)| < \frac{1}{2}$ ensures the strict positivity of $\sigma^2$.
\end{remark}

 \begin{example}\label{eq:simpleLOV} (one-factor LOV) 
Consider the multiplicative version \eqref{eq:LOVMultiply}, with  $\tilde{\ell}(t,X_t,x) = \beta \mathds{1}_{A}(x)$,  $A\in \calB(\R_+)$, and   $|\beta| < 1/2$. Then, 
$$
\sigma_t^2 = \sigma_{\text{loc}}^2(t,X_t) \Big(1  + \beta \gamma_t\big (\mathcal{O}_t - \hat{\mathcal{O}}_t \big)(A)\Big).$$ 
In view of \cref{prop:positivity},  $\sigma^2 > 0$. Recalling that $\gamma_t = \calO_t(\R_+)^{-1} = \hat{\calO}_t(\R_+)^{-1}$,  the local variance is thus rescaled according to  the  fraction of time   $\gamma_t\mathcal{O}_t(A)$ 
spent by the asset in $A$ compared to what is  expected. Here, the occupied volatility depends only on $t,X_t$, and the occupation time $\calO_t(A)$. As $(\calO(A),X)$ is a two-dimensional Markov process, trajectories from the LOV model can thus be effectively simulated under this simple specification. In  \cref{alg:LOV}, this  would correspond to choosing $M=1$ corridor. 

Examples for $A$ include the upside corridor $A^{+} = [X_0,\infty)$, where a positive parameter $\beta$ would be consistent with the leverage effect.  Alternatively, one can consider a narrow corridor around the money, e.g. $A = [0.9X_0,1.1X_0]$, where high values for $\mathcal{O}_t(A)$ indicate some mean reversion of the recent path toward the initial value. This mean reversion would, in turn,  translate into lower volatilities than $\sigma_{\text{loc}}$ by choosing $\beta < 0$.  
\end{example}

\begin{example} \label{ex:EMA} (EMA LOV) 
Suppose that $\ell(t,X_t,x) = \beta \log x$ for some parameter $\beta\in \R$.
    Then, using \eqref{eq:EWMA}, we can rewrite  the local occupied volatility \eqref{eq:LOVO} as 
    \begin{align*}
      \sigma_t^2
&=\sigma_{\textnormal{loc}}^2(t,X_t) + \beta   (\text{EMA}_t -\widehat{\text{EMA}}_t), \\[1.2em] 
\text{EMA}_t &=   \frac{1}{1-e^{-\kappa t}} \int_0^t  \log X_u \kappa e^{-\kappa (t-u)} du, \qquad \;  
\widehat{\text{EMA}}_t =  \E^{\Q}[\hspace{0.2mm} \text{EMA}_t  \ | \  X_t\hspace{0.2mm}].
 \end{align*}
 The trend $\text{EMA}_t -\widehat{\text{EMA}}_t$ is 
similar to the first offset process of \citet{HobsonRogers}, where the projected  EMA is replaced by the log spot $\log X_t$. Note that choosing $\beta > 0$ makes  the path-dependent correction term consistent with the leverage effect.  
\end{example}

 \begin{figure}[t]
\centering
\caption{Local occupied volatility (yellow dot) compared with local volatility (red). The yellow curve is the local sensitivity function $x\mapsto \ell(t,X_t,x)$, integrated against the spread  $\calO_t-\hat{\calO}_t$ between the realized (blue) and implied  (purple) occupation measure.   }

\begin{subfigure}[b]{0.482\textwidth}
     \centering
\includegraphics[height=2.4in,width=2.86in]{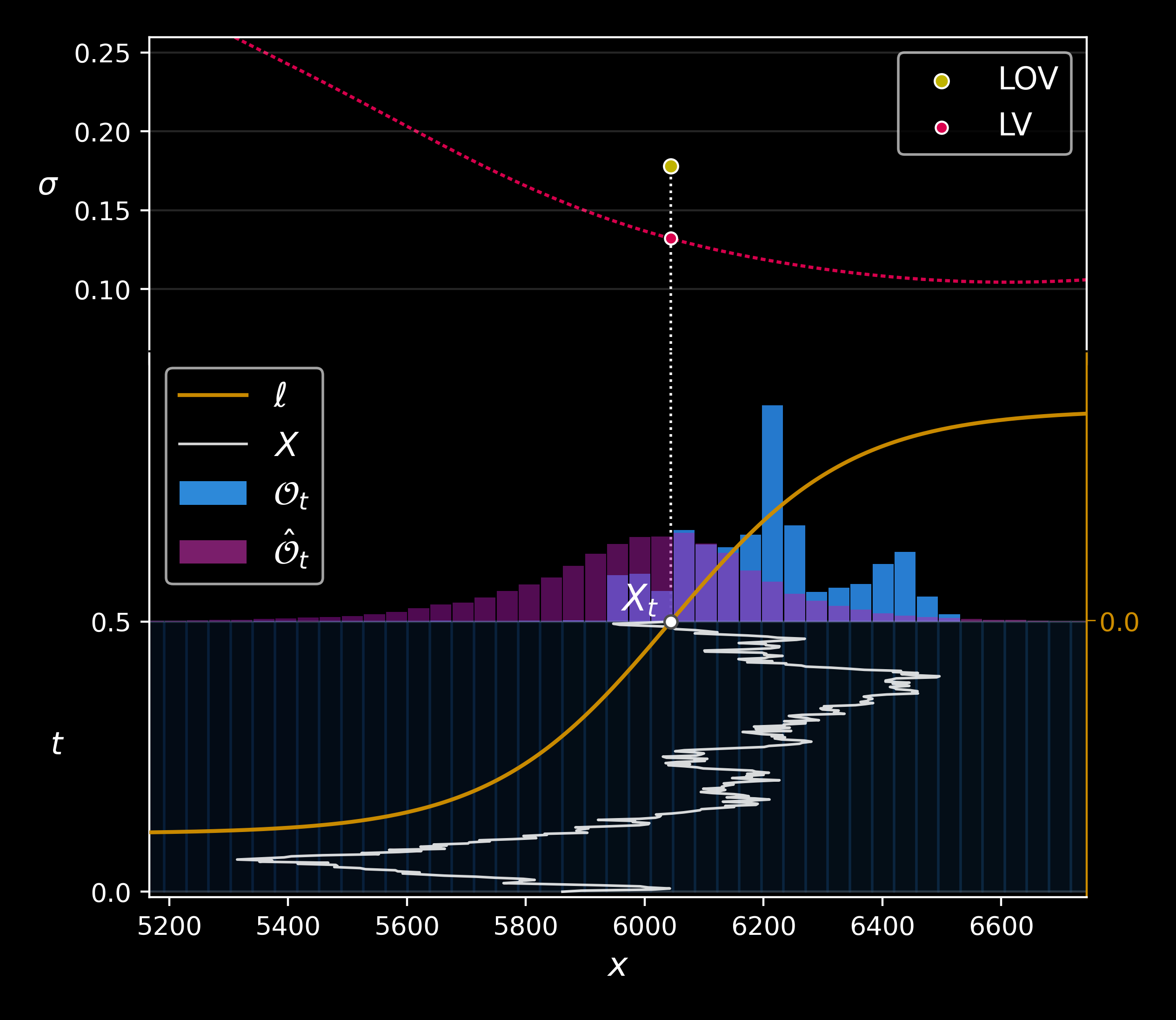}
\end{subfigure}
\begin{subfigure}[b]{0.482\textwidth}
     \centering
\includegraphics[height=2.4in,width=2.86in]{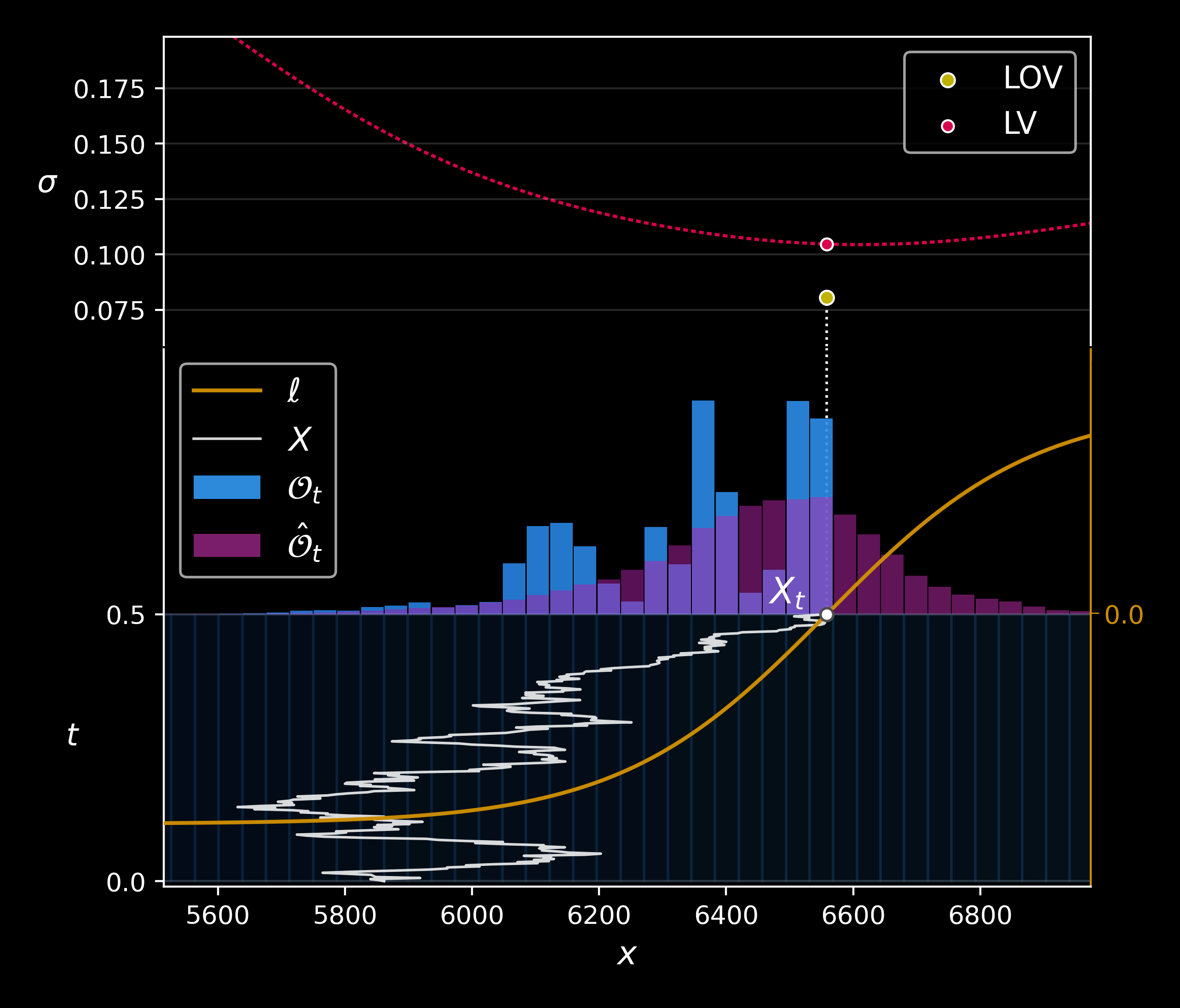}
 \end{subfigure}
\vspace{5mm}
 \label{fig:LOVSIM2}
 \end{figure}

 \begin{figure}[t]
\centering
\caption{Simulated price path and volatility in the Local Occupied Volatility (LOV) model, and corresponding local volatility (LV). }

\begin{subfigure}[b]{0.48\textwidth}
     \centering
\includegraphics[height=3in,width=2.8in]{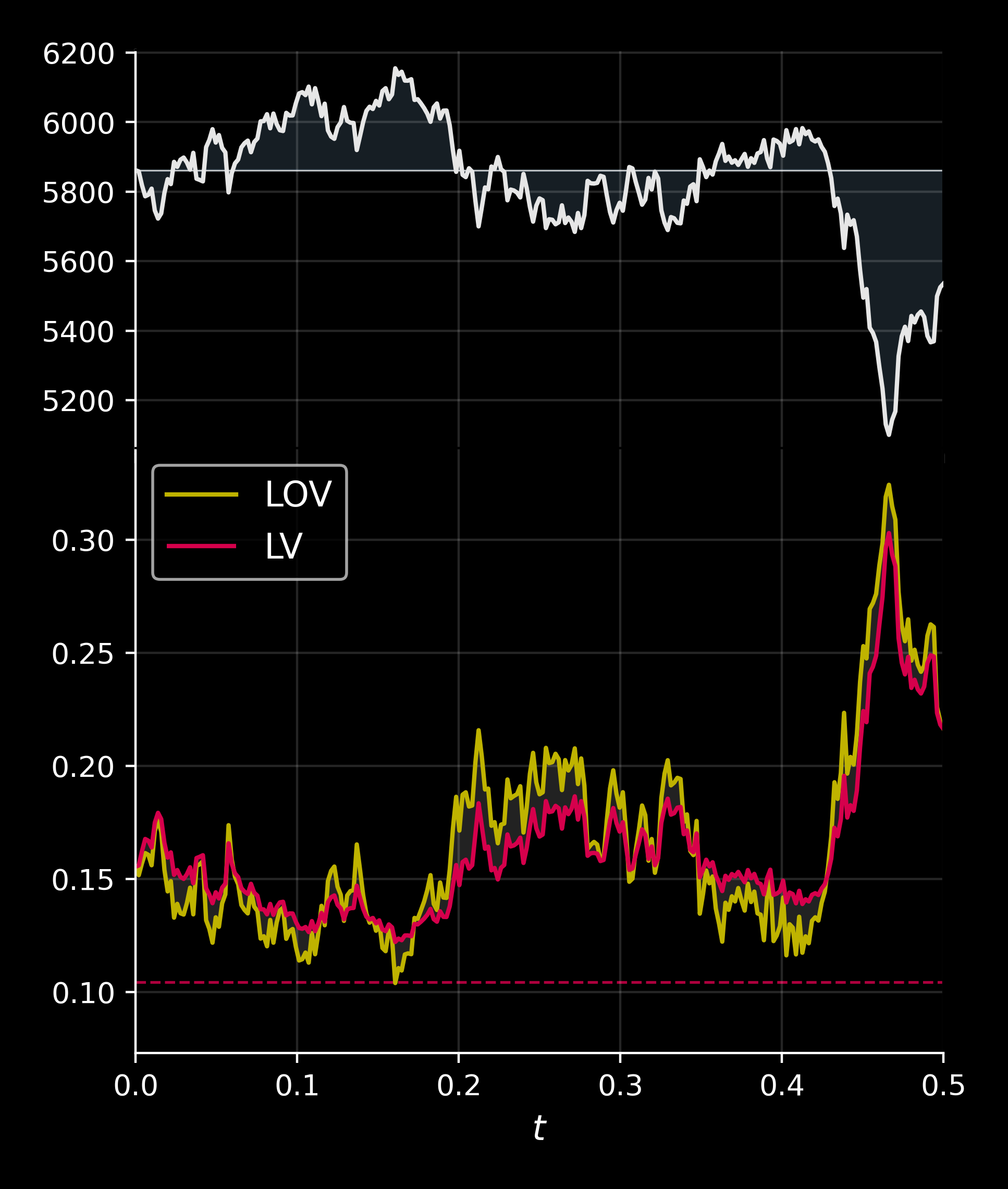}
\end{subfigure}
\begin{subfigure}[b]{0.48\textwidth}
     \centering
\includegraphics[height=3in,width=2.8in]{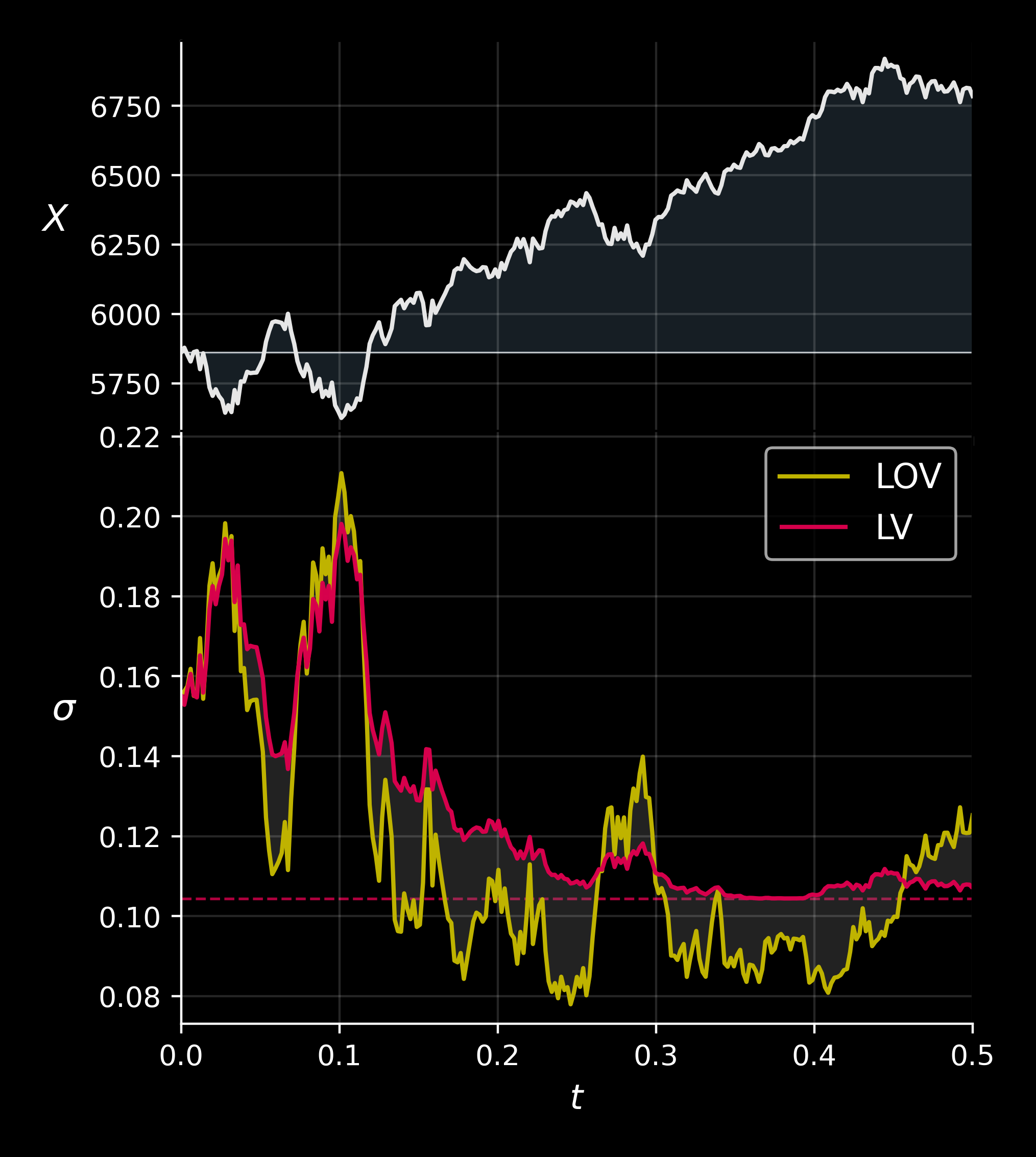}
 \end{subfigure}
 
\vspace{5mm}
 \label{fig:LOVSIM}
 \end{figure}

\subsubsection{Simulation}\label{sec:LOVSIM}
We sample trajectories from \eqref{eq:LOVX}-\eqref{eq:LOVO}  through an extension of the algorithm in \cref{app:numOSDE}. Specifically, the projected flow $\hat{\calO}$ is estimated using the particle method  \cite{GuyonHenryLabordere2012}. Concretely, the projected occupation times $\hat{O}_m(\omega_j)$ in Step III.2. is estimated by the Nadaraya-Watson estimator, 
$$\hat{O}_m(\omega_j) = \frac{\sum_{j'} \psi_{j,j'}O_m(\omega_{j'})}{\sum_{j'} \psi_{j,j'}}, \quad \psi_{j,j'} = \psi(X_{t_{n}}(\omega_j)-X_{t_{n}}(\omega_{j'})),$$
where $\psi:\R\to \R_+$ is a kernel. In our numerical experiments, we use the quartic kernel $\psi(\delta) = \frac{15}{16}(1-(\delta/h)^2)^2/h$, and set the bandwidth $h$ following the guidelines in  \cite[Chapter 11]{GuyonHL}. The  complete simulation scheme is provided in \cref{alg:LOV}. 

\begin{algorithm}[t]
\caption{(LOV Model,  Simulation) }\label{alg:LOV}

\vspace{1mm}
 $T>0$, $N \in \N$ (\# time steps),  $M$ (\# corridors), 
  $J \in \N$ (\# simulations), $t_n = n\delta t$, $\delta t = \frac{T}{N}$ (grid) 
 \\[-0.8em] 
\noindent\rule{\textwidth}{0.5pt} 
\begin{itemize}
 \setlength \itemsep{0.5em}

\item[I.] \textbf{Initialize} $O = 0 \in \R^M, X_0 = x_0$ 

\item[II.] \textbf{Generate} Gaussian random variates $Z_{n}(\omega_j) \overset{i.i.d.}{\sim}  \calN(0,1)$ 

\item[III.] \textbf{For} $n=0,\ldots,N-1$: 
\begin{enumerate}
 \setlength \itemsep{0.75em}

\item \textbf{Compute} $\hat{O}$ using the particle method:  
$$\hat{O}_m(\omega_j) = \frac{\sum_{j'} \psi_{j,j'}O_m(\omega_{j'})}{\sum_{j'} \psi_{j,j'}}, \quad \psi_{j,j'} = \psi(X_{t_{n}}(\omega_j)-X_{t_{n}}(\omega_{j'})) $$ 
    
\item \textbf{Update $\sigma$:}  
$\sigma_{t_{n}}^2(\omega_j) = \Big(\sigma_{\text{loc}}^2(t_{n},X_{t_{n}}) + \gamma_{t_{n}}^{\kappa }\sum\limits_{m=1}^M \ell(t_n,X_{t_{n}}, x_m) (O_m-\hat{O}_m)\Big)(\omega_j)$

  \item \textbf{Update $X$:} $X_{t_{n+1}}(\omega_j) = X_{t_{n}} \exp\big\{\sigma_{t_n} \sqrt{\delta t}Z_{n}  + (r-q - \frac{1}{2} \sigma_{t_n}^2)\delta t \big\}(\omega_j)$\\[-0.5em]

\item \textbf{Update $O $:}$\quad   O_m(\omega_j) \leftarrow O_m(\omega_j) + e^{\kappa t_{n}} \delta t$,  where $ \;  X_{t_{n}}(\omega_j) \in C_m$ 

\end{enumerate}
   \end{itemize}

\end{algorithm}

\begin{example} \label{ex:LOV}
Let us sample trajectories of the LOV model, where the local volatility is extracted from put/call options on the S\&P 500 Index (SPX) as of February 27, 2025. 
 Assume that $\ell(t,X_t,\cdot)$ is time-homogeneous,  and nondecreasing in $x$. For simplicity, consider  the following (scaled) hyperbolic tangent of the moneyness  $x / X_t$, 
$$\ell(t,X_t,x) = \frac{1}{4} \underline{\sigma}^2 \text{tanh}(\alpha x / X_t), \quad \underline{\sigma} = \inf_{x}\sigma_{\text{loc}}^2(x)>0, \quad \alpha >0.$$ 
According to \cref{prop:positivity}, 
the occupied variance remains positive. 
\cref{fig:LOVSIM} shows two sample paths of   the asset and  volatilities with  $\kappa = 12$ (one month window), 
and $T=0.5$ (six month). 

\cref{fig:LOVSIM2} displays the occupation measures   and sensitivity function $x\mapsto \ell(t,X_t,x)$ (yellow curve) at $t=T$, together with the local and occupied volatility. As can be seen, the local volatility is adjusted up or down according to the difference between the realized and implied occupation measures $\calO_t,\hat{\calO}_t$. In the left panel of \cref{fig:LOVSIM2}, the recent path has  spent more time above the spot than \textit{expected}, namely $\calO_t(dx) > \hat{\calO}_t(dx)$ for almost every $x>X_t$, indicating  a   possibly large drawdown resulting in higher volatility. 
Since the sensitivity function is positive above the spot and negative otherwise, the occupied volatility in the LOV model will be greater than the local volatility, further capturing the leverage effect.  The right panel of \cref{fig:LOVSIM2} shows the opposite situation, where the spot is close to the asset's recent high, which typically translates into lower volatilities.
\end{example}

%% file: Calibration.tex
\section{Joint American-European Calibration}\label{sec:nodividends}

In this section, we describe the calibration of the LOV model to market quotes for non-dividend-paying U.S. stocks. 
We then assess the model's empirical performance on the Amazon.com, Inc. (AMZN) option chain.
 While the local volatility function guarantees perfect calibration to European vanilla options, the sensitivity function $\ell$ is tuned to match the price of the American puts:
  $$ \sigma^2(\calO_t,X_t) = \underbrace{\sigma_{\text{loc}}^2(t,X_t)}_{\text{European calls}} + \ 
\gamma_t \int_{\R_+}\ \underbrace{\ell(t,X_t,x)}_{\text{American puts}} \ (\calO_t-\hat{\mathcal{O}}_t)(dx).  $$ 
The presence of American puts in the calibration process necessitates an efficient  method to price early-exercise products. This   is the focus of the next section. 



\subsection{Least Square Monte Carlo}\label{sec:LSMC}
 We extend the least square Monte Carlo (LSMC) algorithm of \cite{LS} to our setting. In short, LSMC  estimates the continuation value using regressions, and then compare it to the intrinsic value leading to nearly-optimal stopping decisions. 
In our setting, we can therefore leverage the Markov property of $(\calO,X)$ to express the continuation value as linear combination of adequate regressors  depending on the current occupation measure and spot value. We shall use the following features: 
\begin{enumerate}

    \item Black-Scholes price $p_{\text{\tiny BS}}(t,X_t,\sigma_t; K,T)$ of the $(K,T)$ put option,

    \item Laguerre polynomials of the normalized spot $X_t/x_0$, up to degree $3$,

    \item Total volatility $\sigma_t \sqrt{T-t}$ as of $t \le T$,

    \item Aggregate occupation times $\calO_t(A_n)$, with $A_n = \bigcup_{m = m_{n-1}}^{m_{n} - 1} C_m$, $m_n = 1 + [n M/5]$ ($m_0 = 1$), and interaction terms $X_t \cdot \calO_t(A_n)$, $n = 1,\ldots, 5$. 
    
\end{enumerate}

\begin{algorithm}[t]
\caption{(Neural Calibration, LOV Model) }\label{alg:calibration}

\vspace{1mm} 
 \textbf{Input:} batch size $J \in \N$, \# instruments $I \in \N$, bid/ask quotes $(p^{\text{\tiny bid}}_i, p^{\text{\tiny ask}}_i)$, option parameters $(K_i,T_i,\eta_i) \in \R_+^2\times \{\pm 1\}$, calibration weights $w_i\in \R_+$.\\[0.25em]
 \textbf{Output:} calibrated parameters $\theta \in \Theta$. 
 \\[-0.8em] 
\noindent\rule{\textwidth}{0.5pt} 
   
\begin{itemize}
 \setlength \itemsep{0.15em}
   
    \item[I.] \textbf{Generate} paths $X^{\theta}(\omega_1),\ldots, X^{\theta}(\omega_J)$ from the LOV model \eqref{eq:LOVX}-\eqref{eq:LOVO} with $\ell = \ell^{\theta}$,

    \item[II.] \textbf{Estimate} the model prices $p_i^\theta$ using Least Square Monte Carlo, 
    
    \item[III.] \textbf{Compute} the  loss function  given by   the weighted  root mean-squared error, 
    \begin{equation}\label{eq:calibLoss}
          \calL(\theta) = \Big(\frac{1}{I} \sum_{i=1}^I |w_i   (p^{\theta}_i - p^{\text{\tiny mid}}_i)|^2\Big)^{1/2}, \quad w_i = (\vartheta^{\text{\tiny mid}}_i)^{-1} \cdot \frac{1}{p^{\text{\tiny ask}}_i - p^{\text{\tiny bid}}_i}, 
    \end{equation}
    with the mid price $p^{\text{\tiny mid}}_i = \frac{p^{\text{\tiny bid}}_i + p^{\text{\tiny ask}}_i}{2}$ and Black-Scholes  Vega $\vartheta^{\text{\tiny mid}}_i$ (see \cref{rem:calibWeight}),  
    
  \item[IV.] \textbf{Update}: $  \theta \ \leftarrow \ \theta - \zeta \nabla \! \calL(\theta),   
$ with learning rate $\zeta$ given by the Adam optimizer,  

 \item[V.] \textbf{Repeat}  above steps until  calibration is complete.
\end{itemize}

\end{algorithm}

\subsection{Neural Calibration}\label{sec:calibration}

We parameterize the implied sensitivity function   by a feedforward neural network
\begin{equation}\label{eq:neuralNet}
    \ell(t,X_t,x) \approx  \ell^{\theta}(t,X_t,x) := \Phi(t,X_t,x; \theta),  \quad \Phi:[0,T] \times \R_+^2 \times \Theta \to \R_+,
\end{equation}
for some parameter set $\Theta \subseteq \R^p$, $p \gg 1$. 
The parameters are then  calibrated to market quotes of  vanilla call and put options (see \cref{alg:calibration}) using stochastic gradient descent. 
Note that a single neural network is used in \eqref{eq:neuralNet} for all time points, which presents significant benefits in terms of computational cost and smoothness in the time variable; see \cite{ReppenSonerTissot}. The optimization is also done on the full horizon all at once, often called global-in-time approach \cite{HuLauriere}.

Consider a collection of traded put and call options written on a given non-dividend paying stocks. First, we extract the implied volatility skews from the call options (being de facto European), as well as the local volatility function $\sigma_{\text{loc}}$. 
Suppose we are given  $I\in \N$ quoted put options with strike and maturity respectively denoted by  $K_i, T_i$. 
We then train the parameters of the neural sensitivity \eqref{eq:neuralNet} so that the model prices,  
$$p^{\theta}_i = \sup_{\tau \le T_i} \E^{\Q}[e^{-r\tau}(K_i - X_{\tau}^{\theta})^+],  $$
  lie between the bid and ask market prices $p^{\text{\tiny bid}}_i, p^{\text{\tiny ask}}_i$ for all $i=1,\ldots,I$. Training loop is summarized in \cref{alg:calibration}. In Step V., the calibration is deemed complete when the loss has stabilized and remains below the following  bid/ask  threshold (see also  \cite{HamidaCont})
$$\alpha = \Big(\frac{1}{I} \sum_{i=1}^I |w_i   (p^{\text{\tiny bid/ask}}_i - p^{\text{\tiny mid}}_i)|^2\Big)^{1/2} = \frac{1}{2}\Big(\frac{1}{I} \sum_{i=1}^I |w_i   (p^{\text{\tiny ask}}_i - p^{\text{\tiny bid}}_i)|^2\Big)^{1/2}.$$  



\begin{remark} \label{rem:calibWeight}(Calibration weights)
 Similar to \citet{HamidaCont},  we use the inverse  Black-Scholes  Vega, retrieved from the call options,  
 in \eqref{eq:calibLoss} to 
translate prices into implied volatilities. 
As the Vega becomes arbitrarily small for out-of-the-money options and short maturities, we also cap the weights by setting instead $(\vartheta^{\text{\tiny mid}}_i \vee \varepsilon)^{-1}$ in \eqref{eq:calibLoss} with,  e.g., $\varepsilon = 10^{-2}$; see also \cite{HamidaCont}. 
Although one could directly use the implied volatility errors $|\sigma^{\theta}_{i} - \sigma^{\text{\tiny mid}}_{i}|$ in the loss function, this would necessitate the  computation of the model implied volatilities $(\sigma^{\theta}_{i})_{i=1}^I$ at every training step, which is prohibitively costly for large $I$.
The Vega weight is then multiplied by the inverse  bid-ask spread $\frac{1}{p^{\text{\tiny ask}}_i - p^{\text{\tiny bid}}_i}$ to 
give greater importance to   liquid contracts. 
\end{remark}

\subsection{Market Data and Results}

Consider market quotes for the front five-month monthly options written on Amazon Inc. (AMZN) as of September 15, 2025.\footnote{Data retrieved from Bloomberg.} On this valuation date, the closing price of the AMZN Index was $x_0 = 231.80$. We also remove illiquid contracts with percentage bid-ask spread greater than $25\%$. This results in a calibration set of $I = 533$ instruments, consisting of 275 calls and 258 puts. The risk-free rate is fixed at $r = 3.7\%$, consistent with short-term U.S. Treasury rates as of September 15, 2025.

We sample $2^8$ pairs of antithetic paths (hence $J=2^9$) for each training iteration, which are increased to $2^{12}$ pairs after $1,000$ epochs.
In Step I., the occupied SDE is discretized using $\delta t = \frac{1}{252}$ (daily time step) and 
$M = 2^6 -1 = 63$ bins 
centered at equally spaced nodes between $x_0 \pm 2\sigma_{\text{\tiny  ATMF}}(T)\sqrt{T}$, where $\sigma_{\text{\tiny  ATMF}}(T)$ is the  at-the-money forward implied volatility for the last expiration date (February 2, 2026). We also  choose calendar  time parameter ($\kappa = 12$) to reflect a one month horizon.   The neural network  contains two hidden layers  containing $2^6 = 64$ nodes, with the ReLU and softplus activation function  for the hidden and output layers, respectively. 

Once the training is complete, all model prices are computed using the same Monte Carlo paths, consisting of $2^{12} $ antithetic pairs. The training phase and pricing take respectively about $10$ minutes and $20$ seconds on an Apple MacBook Pro M3 using CPUs only. In practice, the training can be accelerated using GPU clusters and/or using fewer instruments across the strike dimension.  

\cref{fig:FIT} illustrates AMZN put prices derived from the calibrated LOV model, with values consistently falling within or near the market bid-ask range. Furthermore, \cref{fig:LVvsLOV} provides a closer view of the strikes surrounding the forward, alongside prices from the Local Volatility (LV) model calibrated to AMZN call options. While the LV and LOV models yield nearly identical results for short-dated maturities, the LV model overestimates the value of 
the put options 
as  time to maturity increases.

\begin{figure}[H]
    \centering
    \caption{Model prices of AMZN put options compared to market bid-ask range. Six expiration dates, as of 2025-09-15.}
    
    \begin{subfigure}[t]{0.31\linewidth}
        \centering
        \includegraphics[width=\linewidth, height=1.85in, keepaspectratio=false]{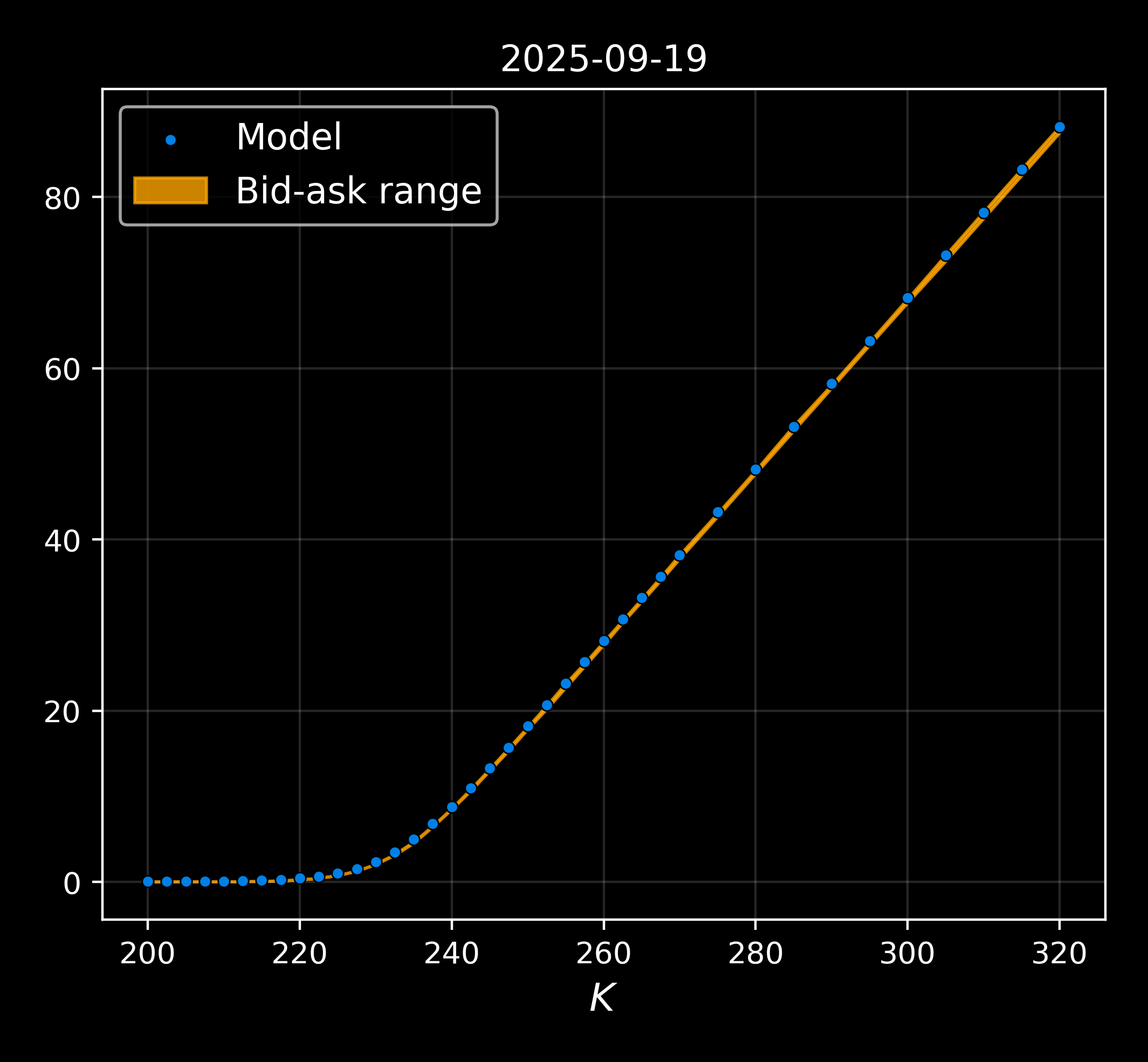}
        \label{fig:fit1}
    \end{subfigure}\hfill
    \begin{subfigure}[t]{0.31\linewidth}
        \centering
        \includegraphics[width=\linewidth, height=1.85in, keepaspectratio=false]{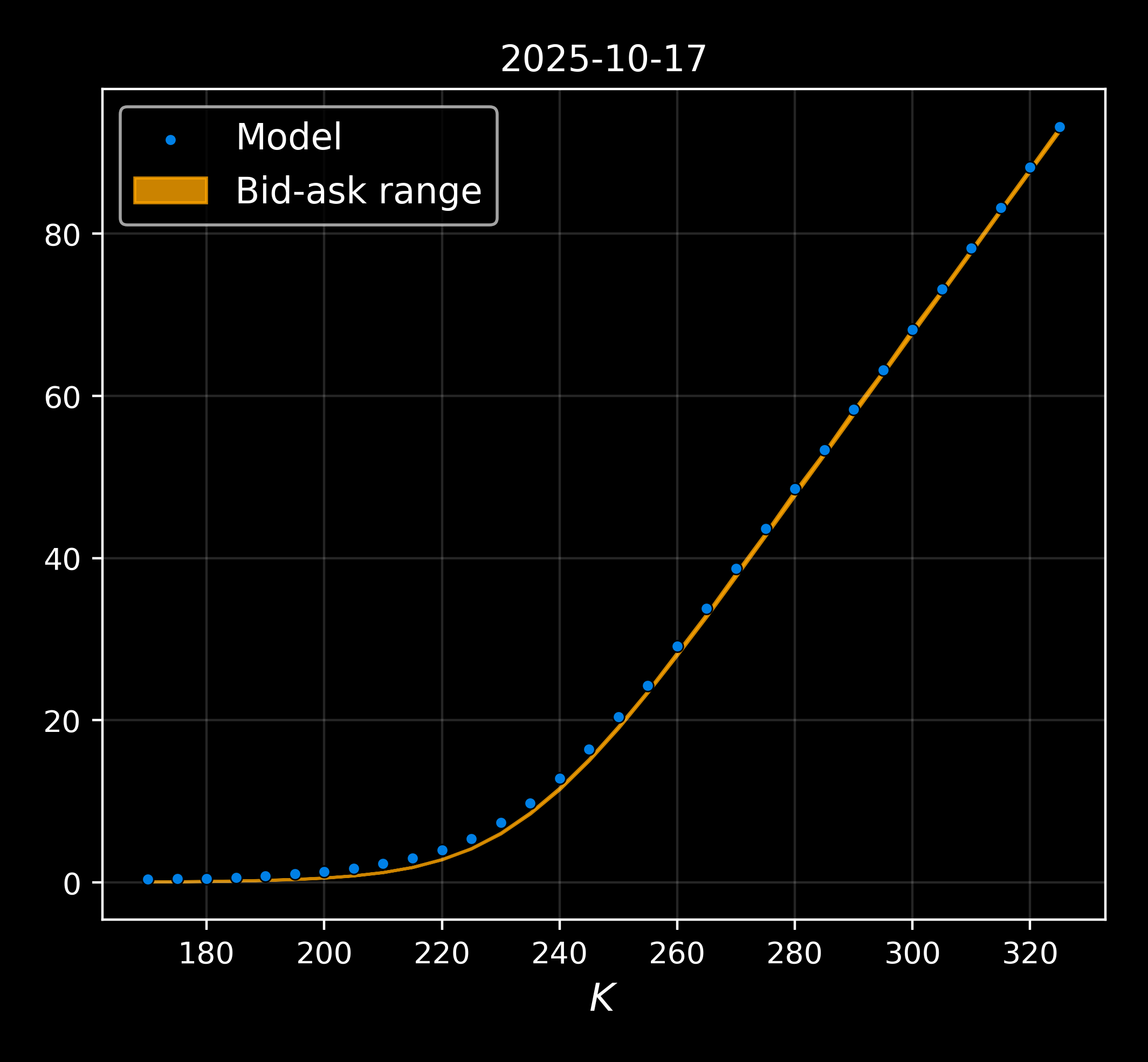}
        \label{fig:fit2}
    \end{subfigure}\hfill
    \begin{subfigure}[t]{0.31\linewidth}
        \centering
        \includegraphics[width=\linewidth, height=1.85in, keepaspectratio=false]{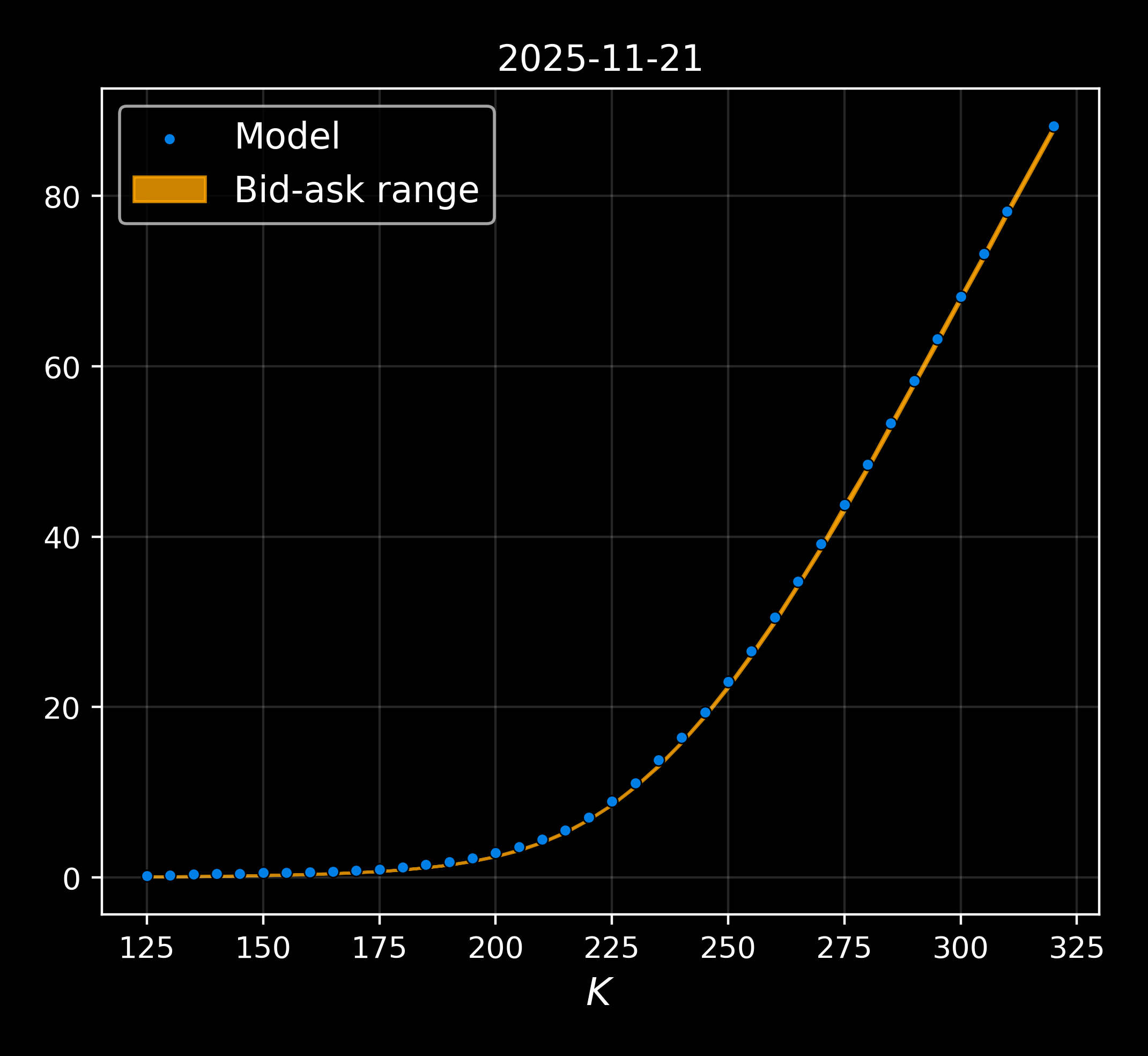}
        \label{fig:fit3}
    \end{subfigure}

    \medskip 

    \begin{subfigure}[t]{0.31\linewidth}
        \centering
        \includegraphics[width=\linewidth, height=1.85in, keepaspectratio=false]{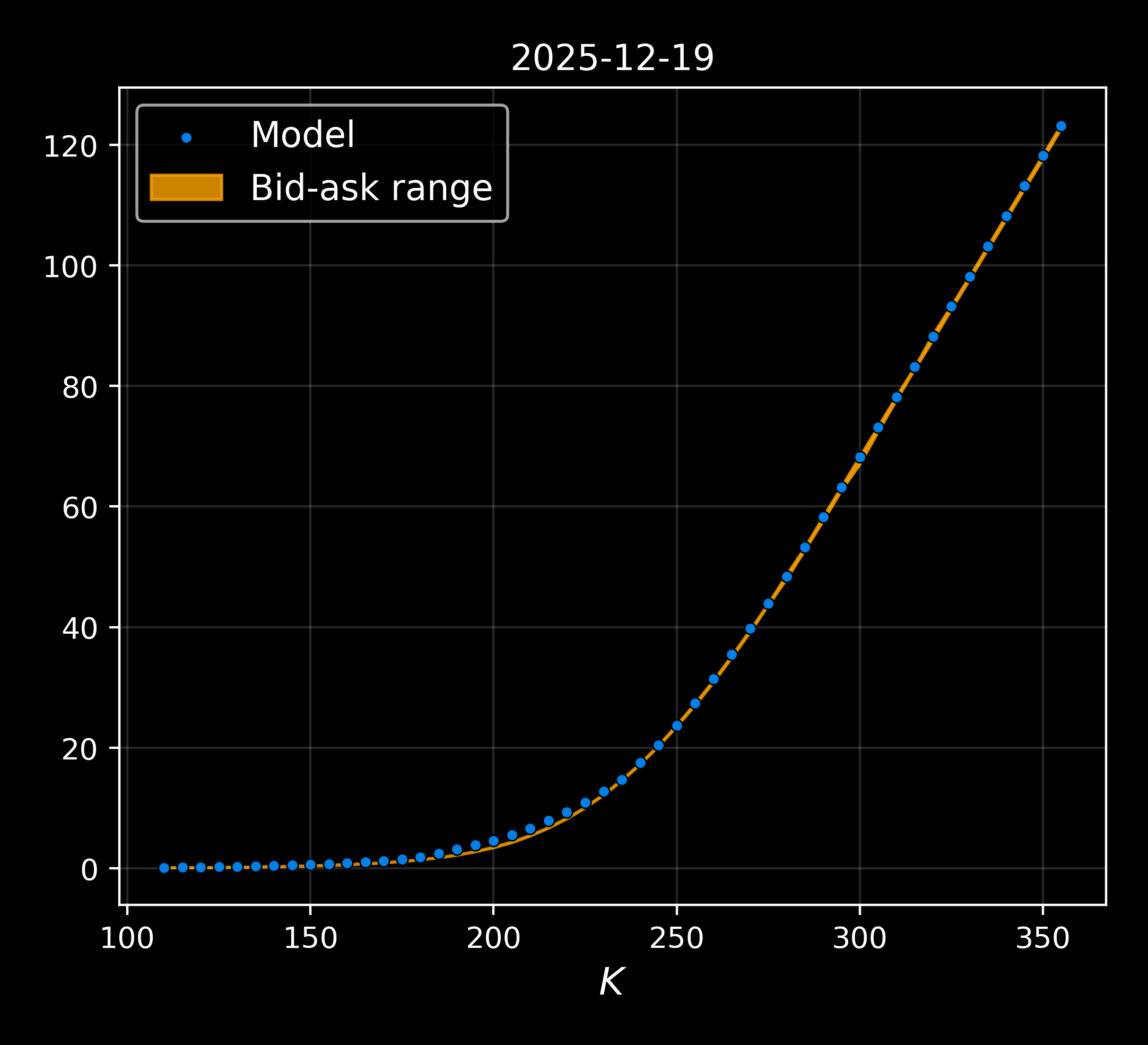}
        \label{fig:fit4}
    \end{subfigure}\hfill
    \begin{subfigure}[t]{0.31\linewidth}
        \centering
        \includegraphics[width=\linewidth, height=1.85in, keepaspectratio=false]{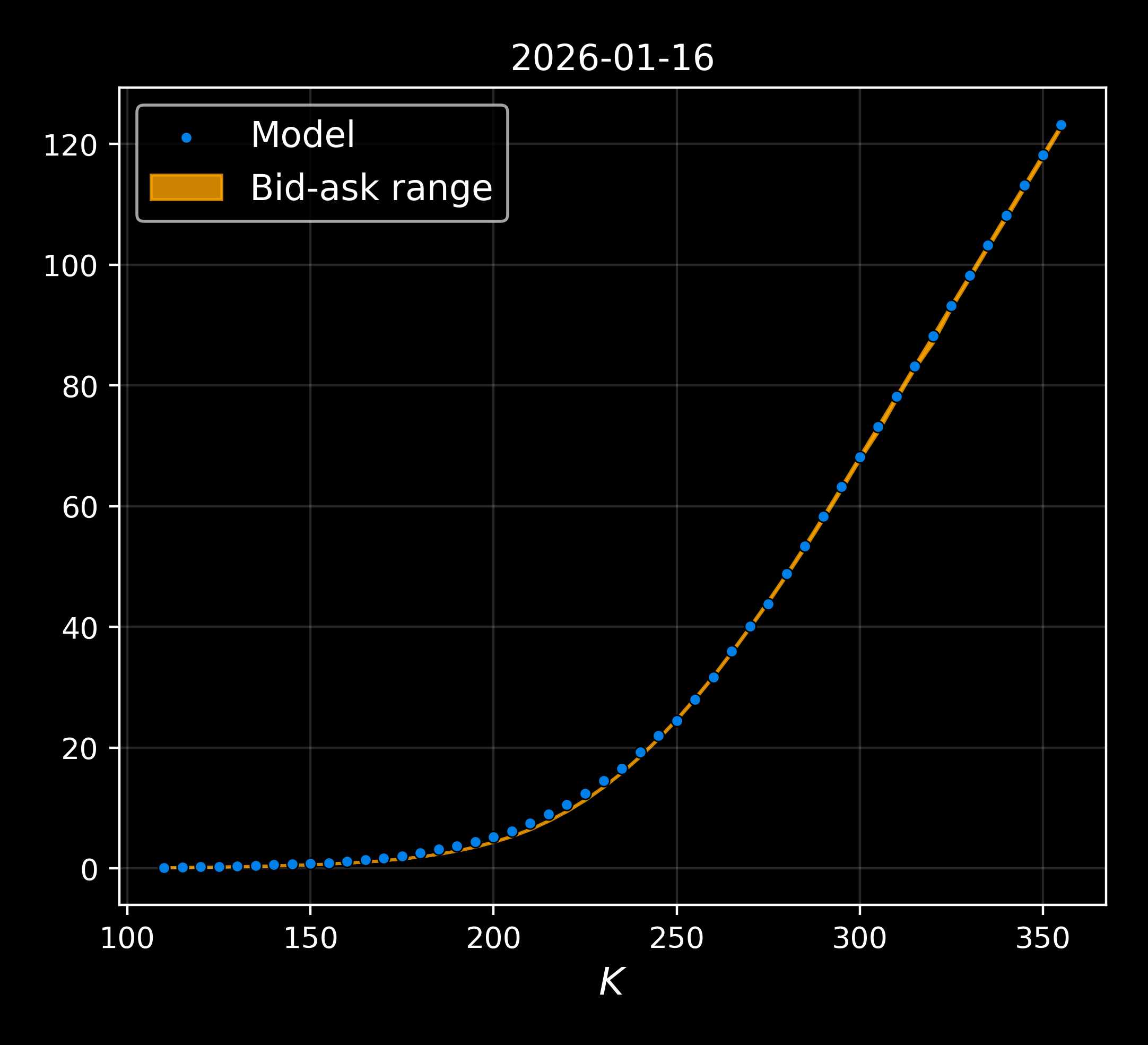}
        \label{fig:fit5}
    \end{subfigure}\hfill
    \begin{subfigure}[t]{0.31\linewidth}
        \centering
        \includegraphics[width=\linewidth, height=1.85in, keepaspectratio=false]{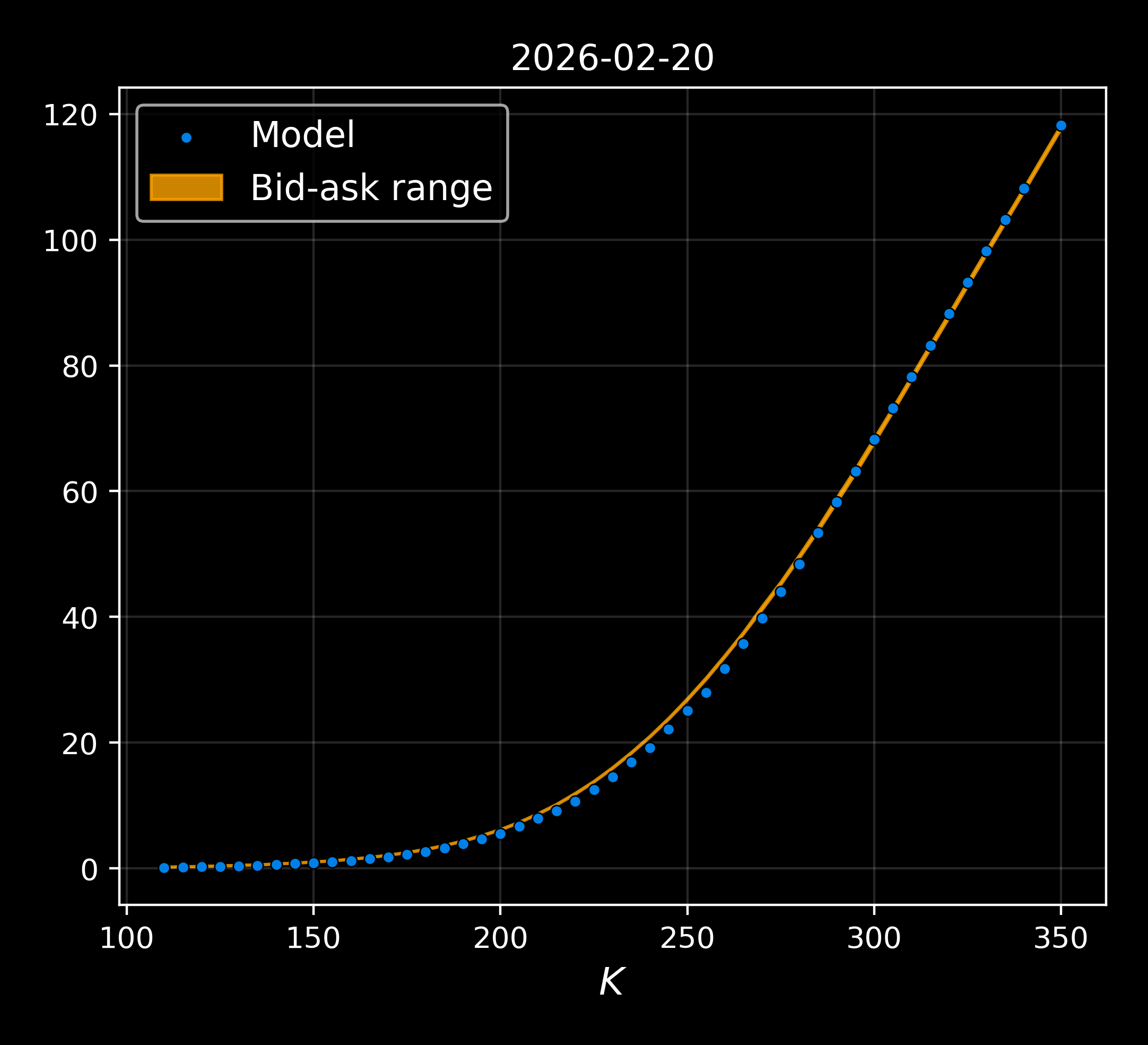}
        \label{fig:fit6}
    \end{subfigure}
    \label{fig:FIT}
\end{figure}

\begin{figure}[H]
    \centering
    \caption{AMZN put option prices in the LOV model,  compared to the local volatility (LV) model and market bid-ask range as of 2025-09-15. }
    
    \begin{subfigure}[t]{0.31\linewidth}
        \centering
        \includegraphics[width=\linewidth, height=1.85in, keepaspectratio=false]{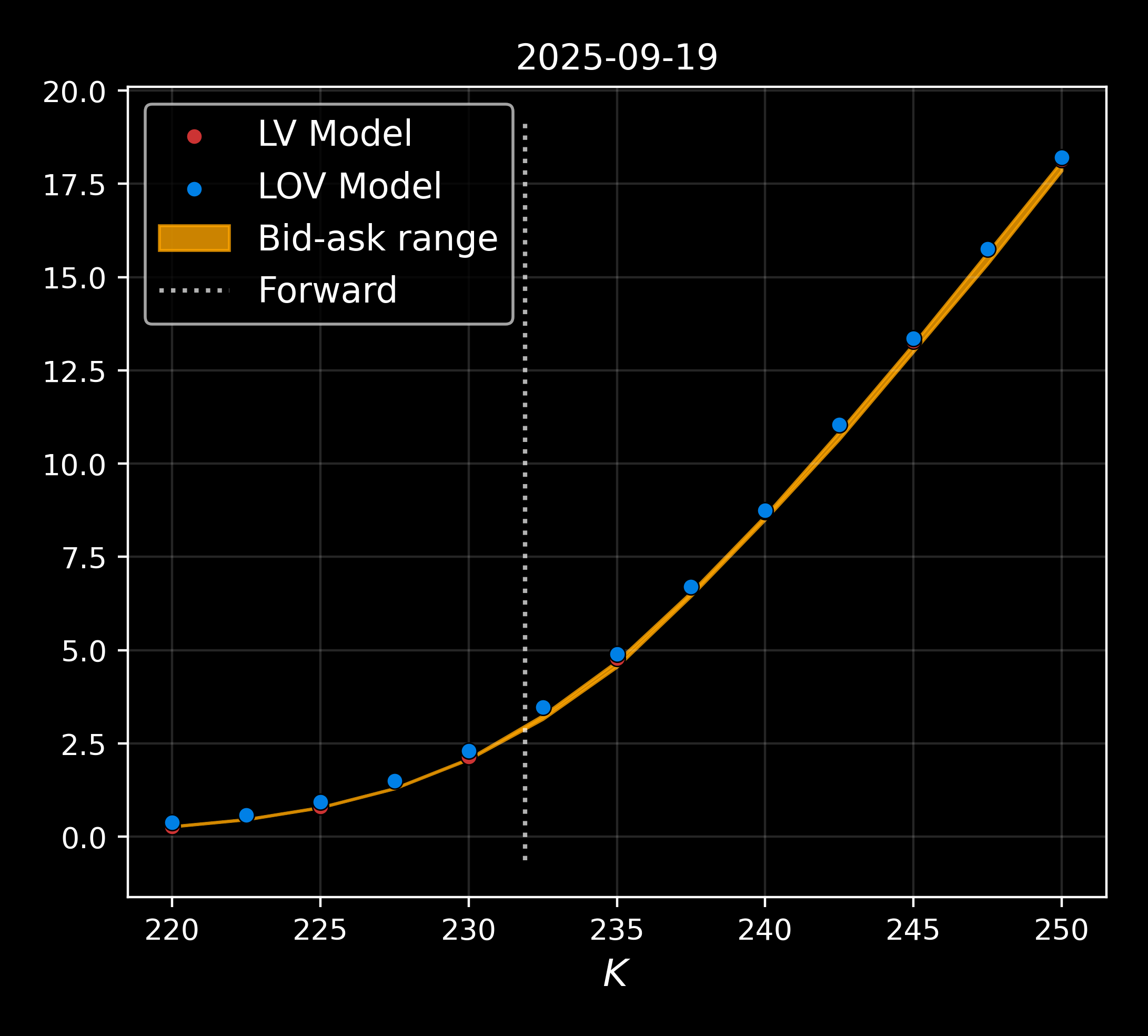}
        \label{fig:fit1}
    \end{subfigure}\hfill
    \begin{subfigure}[t]{0.31\linewidth}
        \centering
        \includegraphics[width=\linewidth, height=1.85in, keepaspectratio=false]{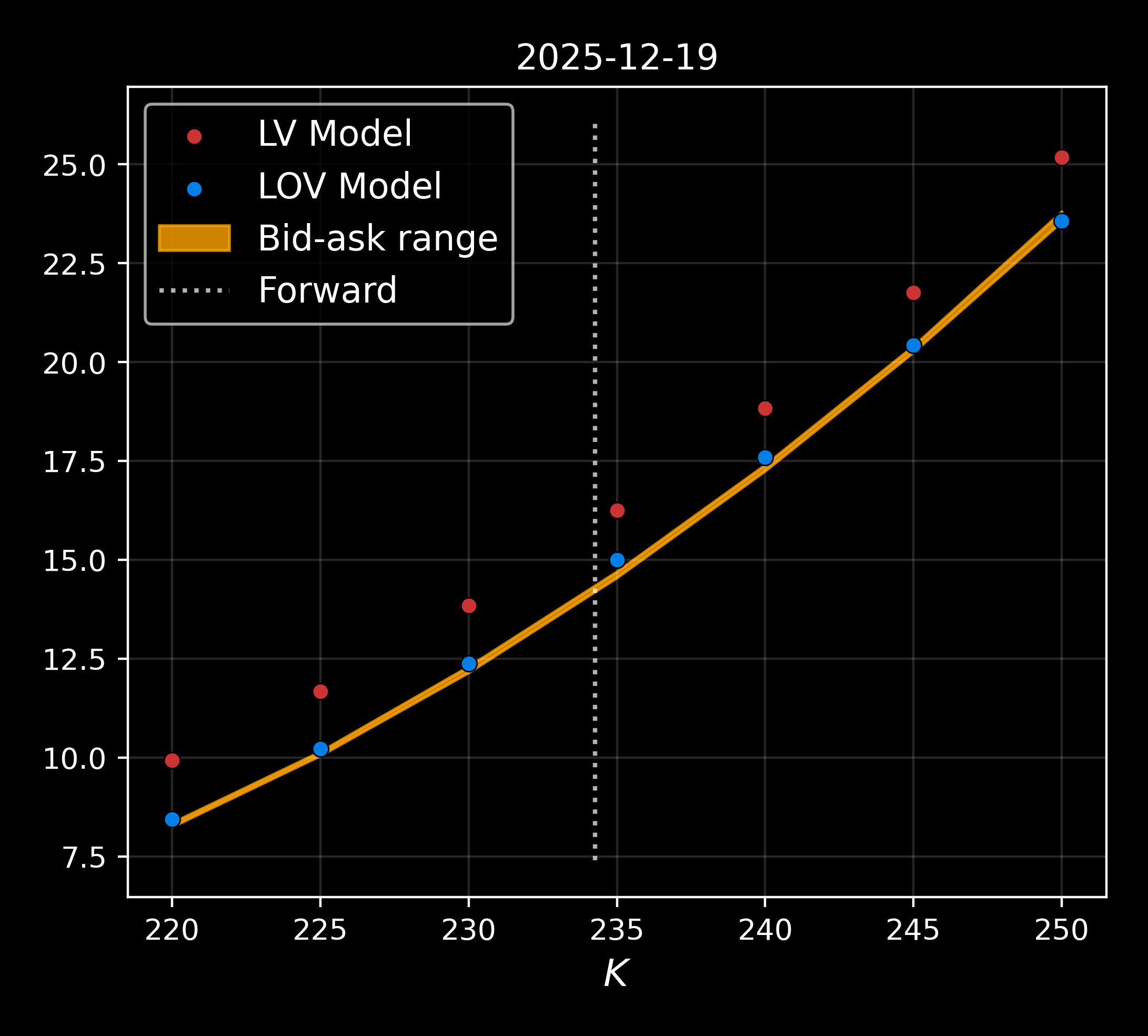}
        \label{fig:fit2}
    \end{subfigure}\hfill
    \begin{subfigure}[t]{0.31\linewidth}
        \centering
        \includegraphics[width=\linewidth, height=1.85in, keepaspectratio=false]{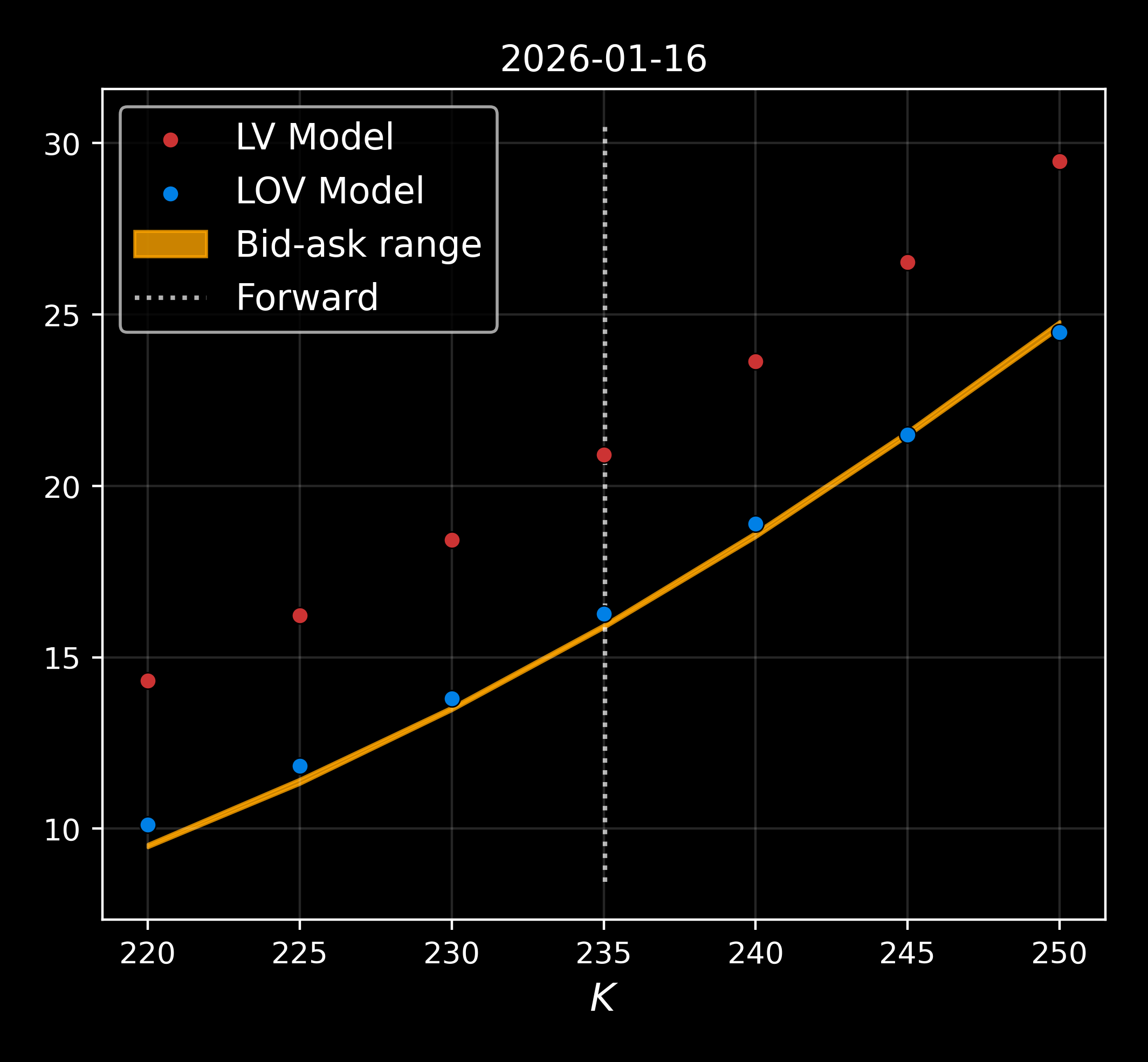}
        \label{fig:fit3}
    \end{subfigure}
\label{fig:LVvsLOV}
\end{figure}

\cref{fig:calibtratedEll} shows the neural sensitivity  $x\mapsto \ell^{\theta}(t,X_t,x)$ for $t  = 1/12$ (one month) and various values for the spot $X_t$.  First, the sensitivity is found to be non-increasing in $x$, a discrepancy with the increasing behavior anticipated under the historical measure to reflect the leverage effect. 
This phenomenon has been consistently observed over multiple calibration runs with varied neural network initializations. Also,  observe the upward  shift of $\ell$ as the spot $X_t$ increases. 
Finally, the calibrated sensitivity turns out to have little dependence with respect to the time variable. This indicates that using time-homogeneous sensitivity functions is enough in this context.  

 \begin{figure}[H]
\centering
\caption{Calibrated neural sensitivity function $x\mapsto \ell^{\theta}(t,X_t,\cdot)$ for $t  = 1/12$ (one month) and various values for the spot $X_t$.  }

\centering
 \includegraphics[height=2.2in,width=3in]{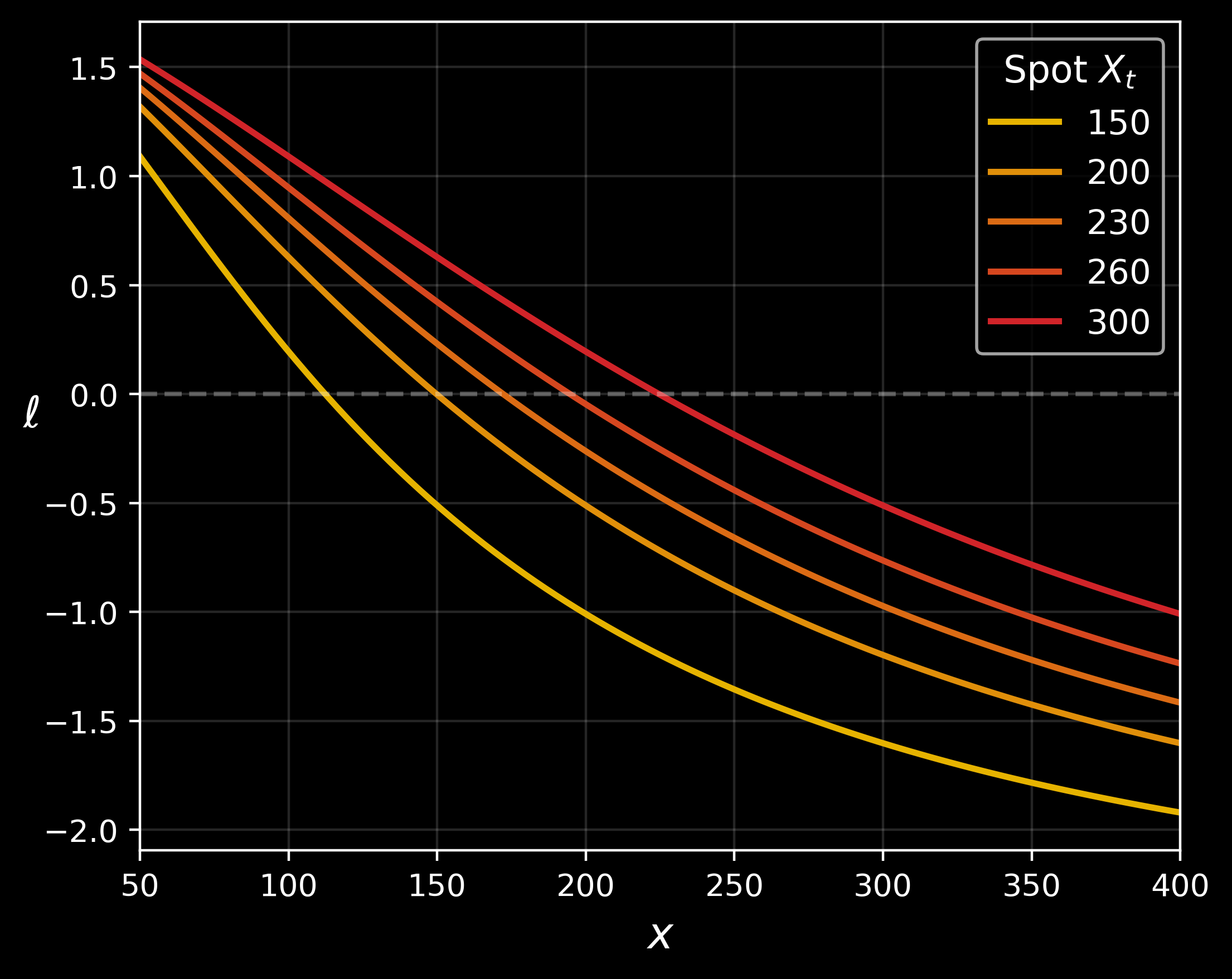}
\label{fig:calibtratedEll}
\end{figure}

%% file: Conclusion.tex
 \section{Adding Dividends} \label{sec:conclusion}


When the underlying asset issues dividends, both call and put options carry an early exercise premium. 
This implies that matching call option quotes is no longer equivalent to  the smile calibration condition \eqref{eq:smileCalibration}. 
Hence, there is no need to consider  McKean dynamics in this case, so the instantaneous variance in the LOV model becomes
$$\sigma(\calO_t,X_t)^2 = \sigma_{\text{loc}}(t,X_t) + \gamma_t^{\kappa} \int_{\R_+}\ \ell^{\Q}(t,X_t,x)  \hspace{0.2mm} \calO_t(dx). $$
Note that the local volatility function is \textit{not} obtained from Dupire's formula. 
Instead, $\sigma_{\text{loc}}$ shall be   parameterized, e.g. using neural networks,  along the sensitivity function $\ell$ to quotes of American put and call options. A notable challenge in this context is the discrete nature of dividends, making the valuation of American options much more involved. We thus leave this direction for future work. 

An alternative approach, called \textit{de-Americanization}, consists of extracting  the prices of synthetic European options from observed American options. This can be done using a binomial model on a fine grid, or using neural networks \cite{LindGatheral2025}. One can then calibrate a volatility model to the synthetic European prices or their corresponding implied volatility surface. 
While the calibration to European-style options  is significantly faster, the de-Americanization step is  model-dependent and thus prone to error \cite{burkovska2018calibration}. Crucially, there is no guarantee that a model calibrated to the synthetic European prices will accurately recover the original American option quotes. 